\documentclass{vldb-arXiv}

\usepackage{graphicx}

\usepackage{times}
\usepackage{amsmath}
\usepackage{amssymb}
\usepackage{subfigure}
\usepackage{mathtools}

\usepackage[T1]{fontenc}
\usepackage{epsfig}
\usepackage{fancyvrb}
\usepackage{color}
\usepackage{stmaryrd}
\usepackage{fancyvrb}
\usepackage{array}

\usepackage{multirow}
\usepackage{multicol}
\usepackage{latexsym}
\usepackage{theorem}
\usepackage{paralist}
\usepackage{calc}
\usepackage{wrapfig}

\usepackage{txfonts}
\usepackage{floatflt}
\usepackage{picins}

\usepackage{graphicx}

\usepackage{ifpdf}

\newif\ifpdf
\ifx\pdfoutput\undefined
  \pdffalse
  \usepackage[colorlinks=false]{hyperref}
\else
  \pdfoutput=1
  \pdftrue
  \usepackage[pdftex,colorlinks=false,urlcolor=blue,linkcolor=blue]{hyperref}
  \pdfcompresslevel=9
  \usepackage{pslatex}
\fi

\sloppy

\toappear{}

%DEFINITIONS:
\newcommand{\hide}[1]{}

\newcommand{\minisec}[1]{\vspace*{0.05cm}\noindent\textbf{#1.}}

\theoremstyle{plain}
\theoremheaderfont{\scshape}
\theorembodyfont{\normalfont}

\makeatletter
\newif\if@restonecol
\makeatother

\usepackage[ruled]{algorithm2e}
\usepackage{algorithmic}

\newtheorem{theorem}{Theorem}

% fix for newtheorem in vldb.cls
\newcommand{\gobble}[5]{#1 #2 #3 #4}
\newcommand{\referencenumber}{1}
\newtheorem{@theoremreference}{Theorem \referencenumber \ignorespaces\gobble}
\newenvironment{theoremreference}[1]
  {\renewcommand{\referencenumber}{#1}\begin{@theoremreference}}
  {\end{@theoremreference}}

\title{Behavioral Simulations in MapReduce}

\numberofauthors{1}
\author {
  \alignauthor Guozhang Wang, Marcos Vaz Salles, Benjamin Sowell, Xun Wang, Tuan Cao,\\
  Alan Demers, Johannes Gehrke,  Walker White \\
  \vspace{0.75ex}
  \affaddr{Cornell University}\\
  \affaddr{Ithaca, NY 14853, USA} \\
  \email{\{guoz, vmarcos, sowell, tuancao, ademers, johannes, wmwhite\}@cs.cornell.edu}\\
  \email{\{xw239\}@cornell.edu}\\
}

\begin{document}

\newcount\hours
\newcount\minutes        %   For computing the time of day on
\newcommand{\timeofday}{ %   Must be computed when called if preloaded
\hours=\time
\minutes=\hours
\divide\hours by60
\multiply\hours by60
\advance\minutes by-\hours
\divide\hours by60
\ifnum\hours>9\else0\fi\the\hours:\ifnum\minutes>9\else
0\fi\the\minutes}

\maketitle

\newcounter{enum}
\newenvironment{packed_enum}{
\begin{list}{\arabic{enum}.}{
  \setlength{\itemsep}{-3pt}
  \setlength{\parskip}{0pt}
  \setlength{\labelwidth}{-4 pt}
  \setlength{\leftmargin}{0 pt}
  \setlength{\itemindent}{0pt}
  \usecounter{enum}}
}{\end{list}}

% Three pictures, three captions.
\newcommand{\threefigures}[9]{
  \begin{figure*}[tb]
    \begin{minipage}{0.327\linewidth}
      \centerline{\includegraphics[width=0.9\linewidth]{#1}}
      \vspace{-1ex}
      \caption{#2} \label{#3}
    \end{minipage} \hfill
    \begin{minipage}{0.327\linewidth}
      \centerline{\includegraphics[width=0.9\linewidth]{#4}}
      \vspace{-1ex}
      \caption{#5} \label{#6}
    \end{minipage} \hfill
    \begin{minipage}{0.327\linewidth}
      \centerline{\includegraphics[width=0.9\linewidth]{#7}}
      \vspace{-1ex}
      \caption{#8} \label{#9}
    \end{minipage}
    \vspace{-2ex}
  \end{figure*}}

% Three pictures, one caption.
\newcommand{\triptych}[5]{
  \begin{figure*}[tb]
    \begin{minipage}{0.327\linewidth}
      \centerline{\includegraphics[width=0.9\linewidth]{#1}}
      \centerline{(a)}
    \end{minipage} \hfill
    \begin{minipage}{0.327\linewidth}
      \centerline{\includegraphics[width=0.9\linewidth]{#2}}
      \centerline{(b)}
    \end{minipage} \hfill
    \begin{minipage}{0.327\linewidth}
      \centerline{\includegraphics[width=0.9\linewidth]{#3}}
      \centerline{(c)}
    \end{minipage}
    \vspace{-2ex}
    \caption{#2} \label{#3}
  \end{figure*}}

% math operations
\newcommand{\C}{\mathcal}
\newcommand{\D}{\Bbb}
\newcommand{\F}{\mathfrak}
\newcommand{\T}{\texttt}
\newcommand{\la}{\langle}
\newcommand{\ra}{\rangle}
\newcommand{\E}[1]{\ensuremath{\langle\T{#1}\rangle}}

\newcommand{\etype}{\ensuremath{\,\Vdash \,}}
\newcommand{\type}{\ensuremath{\,\vdash \,}}
\newcommand{\valid}{\ensuremath{\,\vDash \,}}
\newcommand{\tor}{\ensuremath{\>\>\big|\>\>}}

\newcommand{\dexp}{\ensuremath{::=}}

\newcommand{\key}{\textsc{key}}
\newcommand{\nil}{\textsc{nil}}
\newcommand{\id}{\textsc{id}}
\newcommand{\sng}{\textsc{sng}}
\newcommand{\map}{\textsc{map}}
\newcommand{\flatten}{\textsc{flatten}}
\newcommand{\flatmap}{\textsc{flatmap}}
\newcommand{\pwith}[1]{\textsc{pairwith}_{#1}}
\newcommand{\close}{\textsc{close}}
\newcommand{\oref}{\textsc{ref}}
\newcommand{\dref}{\textsc{dref}}
\newcommand{\etag}{\textsc{tag}}
\newcommand{\att}{\textsc{att}}
\newcommand{\agg}{\textsc{agg}}
\newcommand{\child}{\textsc{child}}
\newcommand{\get}{\textsc{get}}
\newcommand{\nest}{\textsc{nest}}
\newcommand{\app}{\textsc{append}}
\newcommand{\push}{\ensuremath{\Phi}}
\newcommand{\val}{\textsc{V}}

\newcommand{\mrmap}{\ensuremath{\mathsf{map}}}
\newcommand{\mrreduce}{\ensuremath{\mathsf{reduce}}}
\newcommand{\s}{\ensuremath{\mathbf{s}}}
\newcommand{\e}{\ensuremath{\mathbf{e}}}
\newcommand{\f}{\ensuremath{\mathbf{f}}}
\newcommand{\Part}{\ensuremath{P}}
\newcommand{\Pclass}{\ensuremath{\mathbf{P}}}
\newcommand{\loc}{\ensuremath{\ell}}
\newcommand{\VR}{\ensuremath{VR}}
\newcommand{\oid}{\ensuremath{\mathsf{oid}}}
\newcommand{\rdom}{\ensuremath{\mathcal{L}}}

\begin{abstract}
  In many scientific domains, researchers are turning to large-scale
  behavioral simulations to better understand important real-world
  phenomena. While there has been a great deal of work on simulation tools
  from the high-performance computing community, behavioral simulations remain
  challenging to program and automatically scale in parallel environments. In
  this paper we present BRACE (Big Red Agent-based Computation Engine), which
  extends the MapReduce framework to process these simulations efficiently
  across a cluster.  We can leverage spatial locality to treat behavioral
  simulations as iterated spatial joins and greatly reduce the communication
  between nodes. In our experiments we achieve nearly linear scale-up on
  several realistic simulations.

  Though processing behavioral simulations in parallel as iterated spatial joins can be
  very efficient, it can be much simpler for the domain scientists to program
  the behavior of a single agent. Furthermore, many simulations include
  a considerable amount of complex computation and message passing between agents, 
  which makes it important to optimize
  the performance of a single node and the communication across nodes.
  To address both of these challenges, BRACE includes a high-level language
  called BRASIL (the Big Red Agent SImulation Language).  BRASIL has object
  oriented features for programming simulations, but can be compiled to a
  data-flow representation for automatic parallelization and optimization. We
  show that by using various optimization techniques, we can achieve both
  scalability and single-node performance similar to that of a hand-coded
  simulation.
\end{abstract}

%--------------------

\section{Introduction}
\label{sec:intro}

Behavioral simulations, also called agent-based simulations, are
instrumental in tackling the ecological and infrastructure
challenges of our society. These simulations allow scientists to
understand large complex systems such as transportation networks,
insect swarms, or fish schools by modeling the behavior of millions
of individual agents inside the
system~\cite{Buhl06locusts,CTB04,CKFL05}.

For example, transportation simulations are being used to address
traffic congestion by evaluating proposed traffic management
systems before implementing them~\cite{CTB04}. This is a
tremendously important problem as traffic congestion cost \$87.2
billion and required 2.8 billion gallons of extra fuel and 4.2
billion hours of extra time in the U.S. in 2007 alone~\cite{SL09}.
Scientists also use behavioral simulations to model collective
animal motion, such as that of locust swarms or fish
schools~\cite{Buhl06locusts,CKFL05}. Understanding these phenomena
is crucial, as they directly affect human food security~\cite{Gru06}.

Despite their huge importance, it remains difficult to develop
large-scale behavioral
simulations. Current systems either offer
high-level programming abstractions, but are not scalable~\cite{GVH03,LCP+05,MBLA96}, or achieve
scalability by hand-coding particular simulation models using low-level parallel frameworks, such as MPI~\cite{Wen08}.

This paper proposes to close this gap by bringing database-style
programmability and scalability to agent-based simulations. Our core
insight is that behavioral simulations may be regarded as
computations driven by large \emph{iterated spatial joins}. We
introduce a new simulation engine, called BRACE (Big Red
Agent-based Computation Engine), that extends the popular MapReduce
dataflow programming model to these iterated computations. BRACE
embodies a high-level programming language called BRASIL, which is
compiled into an optimized shared-nothing, in-memory MapReduce
runtime. The design of BRACE is motivated by the requirements of
behavioral simulations, explained below.

\subsection{Requirements for Simulation Platforms}
\label{sec:behavioral:requirements}

We have identified several key features that are necessary for a generic
behavioral simulation platform.

\minisec{(1)~Support for Complex Agent Interaction} Behavioral
simulations include frequent local interactions between agents. In
particular, agents may affect the behavior decisions of other
agents, and multiple agents may issue concurrent writes to the same
agent. A simulation framework should support a high degree of agent
interaction without excessive synchronization or rollbacks. This
precludes discrete event simulation engines or other approaches
based on task parallelism and asynchronous message exchange.

\minisec{(2)~Automatic Scalability} Scientists need to scale their simulations to
millions or billions of agents to accurately model phenomena such as city-wide
traffic or swarms of insects~\cite{Buhl06locusts,CKFL05,EFSC09}. These scales
make it essential to use data parallelism to distribute agents across many
nodes. This is complicated by the interaction between agents, which may
require communication between several nodes. Rather than requiring scientists
to write complex and error-prone parallel code, the platform should automatically
distribute agents to achieve scalability.

\minisec{(3)~High Performance} Behavioral simulations are often
extremely complex, involving sophisticated numerical computations
and elaborate decision procedures. Much existing work on behavioral
simulations is from the high-performance computing community, and
they frequently resort to hand-coding specific simulations in a
low-level language to achieve acceptable
performance~\cite{EFSC09,NR01}. A general purpose framework must be
competitive with these hand-coded applications in order to gain
acceptance.

\minisec{(4)~Commodity Hardware} Historically, many scientists have used
large shared-memory supercomputer systems for their
simulations. Such machines are tremendously expensive, and cannot scale
beyond their original capacity. We believe that the next generation of
simulation platforms will target shared-nothing systems and will be deployed
on local clusters or in the cloud on services like Amazon's EC2~\cite{AmazonEC2}.

\minisec{(5)~Simple Programming Model} Domain scientists have shown
their willingness to try simulation platforms that provide simple,
high-level programming abstractions, even at some cost in performance
and scalability
\cite{GVH03,LCP+05,MBLA96}.
Nevertheless, a behavioral simulation framework should
provide an expressive and high-level programming model without
sacrificing performance.

\subsection{Contributions}
\label{sec:contributions}

We begin our presentation in Section~\ref{sec:behavioral} by describing
important properties of behavioral simulations that we leverage in our
platform. We then move on to the main contributions of this paper:

\vspace*{-0.5ex}
\begin{itemize}

\item We show how MapReduce can be used to scale behavioral simulations across
  clusters. We abstract these simulations as iterated spatial joins and introduce
  a new main memory MapReduce runtime that incorporates optimizations motivated by the spatial properties of simulations (Section~\ref{sec:mapreduce}).
 \vspace*{-1ex}
\item We present a new scripting language for simulations that compiles into
  our MapReduce framework and allows for algebraic optimizations in mappers
  and reducers. This language hides all the complexities of modeling
  computations in MapReduce and parallel programming from domain scientists
  (Section~\ref{sec:programming}).
\vspace*{-1ex}
\item We perform an experimental evaluation with two real-world behavioral
  simulations that shows our system has nearly linear scale-up and single-node
  performance that is comparable to a hand-coded simulation.
  (Section~\ref{sec:exps}).
\end{itemize}
\vspace*{-0.5ex}

We review related work in Section~\ref{sec:related} and conclude in
Section~\ref{sec:conclusions}.

\section{Background}
\label{sec:background}

In this section, we introduce some important properties of behavioral
simulations and review the MapReduce programing model.  In the next section we
exploit these properties to abstract the computations in the behavioral
simulations and use MapReduce to efficiently process them.

\subsection{Properties of Behavioral Simulations}
\label{sec:behavioral}

Behavioral simulations model large numbers of individual agents that interact
in a complex environment. Unlike scientific simulations that can be modeled as
systems of equations, agents in a behavioral simulation can execute complex
programs that include non-trivial control flow. Nevertheless, most behavioral
simulations have a similar structure, which we introduce below. We will use a
traffic simulation~\cite{YKB99} and a
fish school simulation~\cite{CKFL05} as running examples. Details on these
simulation models can be found in Appendix~\ref{sec:exps:model:details}.

\minisec{The State-Effect Pattern} Most behavioral simulations use a
time-stepped model in which time is discretized into ``ticks'' that represent
the smallest time period of interest. Events that occur during the same tick
are treated as simultaneous and can be reordered or parallelized. This means
that an agent's decisions cannot be based on previous actions made during the
same tick. An agent can only read the state of the world as of the previous
tick.
For example, in the traffic simulation, each car inspects the
positions and velocities of other cars as of the beginning of the
tick in order to make lane changing decisions.

In previous work on scaling computer games, we proposed a model for this kind
of time-stepped behavior called the
\emph{state-effect pattern}~\cite{White07:Epic,White08:Sandbox}.
The basic idea is to separate
read and write operations in order to limit the synchronization necessary
between agents. In the state-effect pattern, the attributes of an agent are
separated into \emph{states} and \emph{effects}, where states are
public attributes that are updated only at tick boundaries, and effects are
used for intermediate computations as agents interact. Since state attributes
remain fixed during a tick, they only need to be synchronized at the end of
each tick. Furthermore, each effect attribute has an associated decomposable
and order-independent \emph{combinator function} for combining multiple
assignments during a tick.
This allows us to compute effects in parallel and combine the results without
worrying about concurrent writes. For example, the fish school
simulation uses vector addition so that each fish may combine the orientation of
nearby fish into an effect attribute.
Since vector addition is commutative, we can process these assignments in any order.

\vspace*{-1ex}
\parpic[r]{\scalebox{0.22}{\includegraphics[trim=320 115 315 90,clip=true]%
            {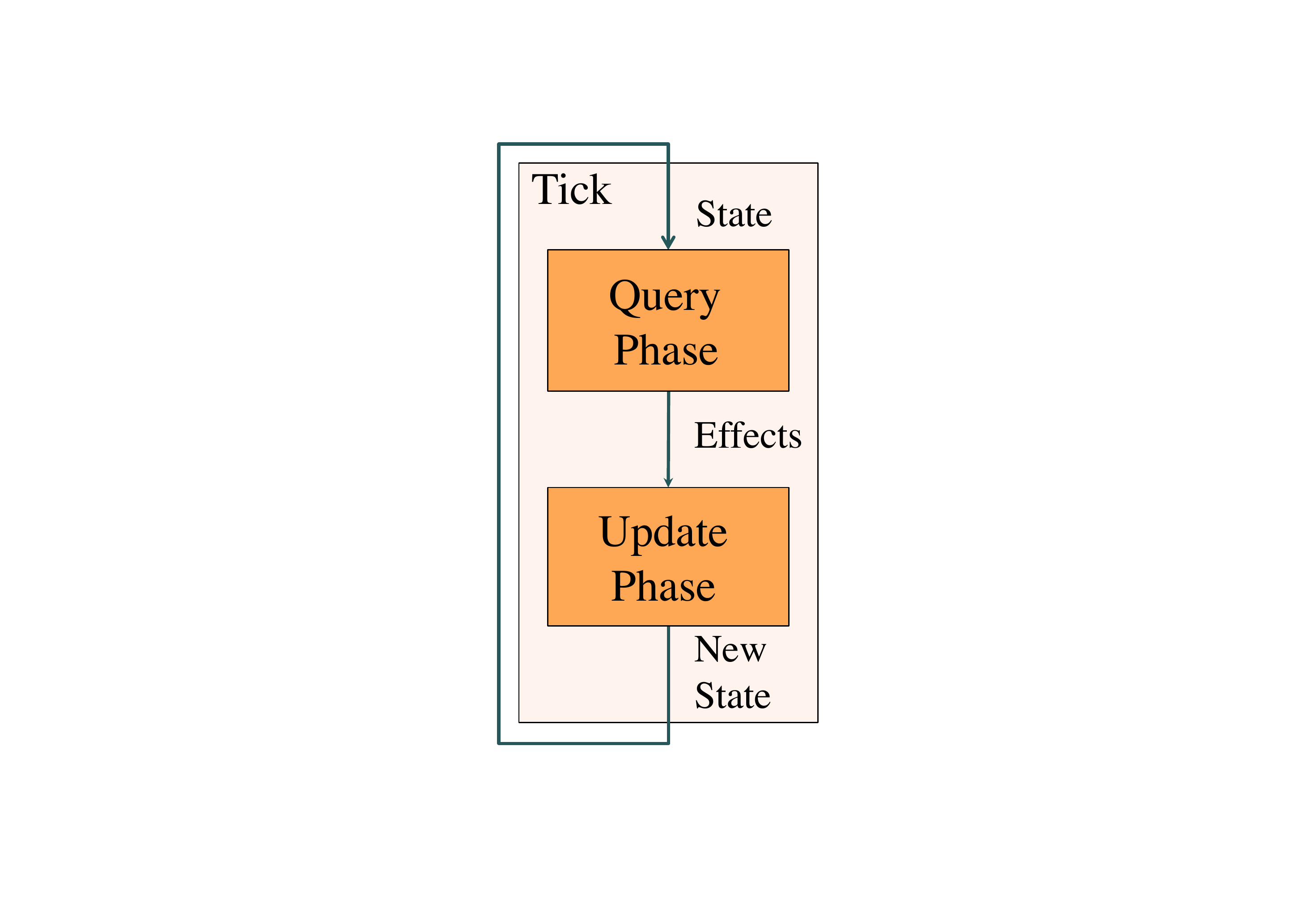}} }
In the state-effect pattern, each tick is divided into two phases: the
\emph{query phase} and the \emph{update phase}, as shown in the figure
on the right. In the query phase, each agent queries the state of the world
and assigns effect values, which are combined using the appropriate combinator
function. To ensure the property that the actions during a tick are conceptually
simultaneous, state variables are read-only during the query phase and effect
variables are write-only.

In the update phase, each agent can read its state attributes and the effect
attributes computed from the query phase; it uses these values to compute
the new state attributes for the next tick. In the fish school simulation,
the orientation effects computed during the query phase are read during the
update phase to compute a fish's new velocity vector, represented as a state
attribute.
In order to ensure that updates do not conflict, each agent can only read and
write its own attributes during the update phase. Hence, the only way that agents
can communicate is through effect assignments in the query phase.
We classify effect assignments into \emph{local} and \emph{non-local} assignments.
In a local assignment, an agent updates one of its own effect attributes;
in a non-local assignment, an agent writes to an effect
attribute of a different agent.

\minisec{The Neighborhood Property} The state-effect pattern can be used to
limit the synchronization necessary during a tick, but it is still possible
that every agent needs to query every other agent in the simulated world to compute its effects. We
observe that this rarely occurs in practice. In particular, we observe that
most behavioral simulations are eminently spatial, and simulated agents can
only interact with other agents that are close according to a distance metric~\cite{EA96}. For
example, a fish can only observe other fish within a limited distance $\rho$.

\subsection{MapReduce}
\label{sec:back:mapreduce}

Since its introduction in 2004, MapReduce has become one of the most
successful models for processing long running computations in
distributed shared-nothing environments\cite{Dean04:MapReduce}.
While it was originally designed for very large batch computations,
MapReduce is ideal for behavioral simulations because it provides
automatic scalability, which is one of the key requirements for
next-generation platforms. By varying the degree of data
partitioning and the corresponding number of map and reduce tasks,
the same MapReduce program can be run on one machine or one
thousand.

The MapReduce programming model is based on two functional
programming primitives that operate on key-value pairs. The
\textsf{map} function takes a key-value pair and produces a set of
intermediate key-value pairs, $\textsf{map}: (k_1,v_1) \to
[(k_2,v_2)]$, while the reduce function collects all of the
intermediate pairs with the same key and produces a value,
$\textsf{reduce}: (k_2,[v_2]) \to [v_3]$. Since simulations consist
of many ticks, we will use an iterative MapReduce model in which the
output of the reduce step is fed into the next map step. Formally,
this means that we change the output of the reduce step to be
$[(k3,v3)]$.

\section{MapReduce for Simulations}
\label{sec:mapreduce}

In this section, we abstract behavioral simulations as computations
driven by iterated spatial joins and show how they can be expressed
in the MapReduce framework
(Section~\ref{sec:iterative:spatial:joins}). We then propose a
system called BRACE to process these joins efficiently
(Section~\ref{sec:mrruntime}).

\subsection{Simulations as Iterated Spatial Joins}
\label{sec:iterative:spatial:joins}

\begin{table}
\small
\begin{tabular}{|c|c|c|c|c|c|} \hline
  \textbf{effects} & $\mathbf{map_1^t}$ & $\mathbf{reduce_1^t}$ & $\mathbf{map_2^t}$ & $\mathbf{reduce_2^t}$ & $\mathbf{map_1^{t+1}}$ \\ \hline
  \multirow{2}{*}{\textbf{local}} & update$^{t-1}$  & \multirow{2}{*}{query$^{t}$}  & \multirow{2}{*}{---} & \multirow{2}{*}{---} & update$^{t}$ \\
  & distribute$^t$ &  &  & & distribute$^{t+1}$ \\ \hline
  \textbf{non-} & update$^{t-1}$ & local & \multirow{2}{*}{---} & global & update$^{t}$ \\
  \textbf{local} & distribute$^t$ & effect$^t$ &  &  effect$^t$ & distribute$^{t+1}$ \\ \hline
\end{tabular}
\vspace*{-2ex}
\caption{The state-effect pattern in MapReduce.}
\label{tab:mapreduce}
\vspace*{-3.5ex}
\end{table}

In Section~\ref{sec:behavioral}, we observed that behavioral
simulations have two important properties: the state-effect pattern
and the neighborhood property. The state-effect pattern essentially
characterizes behavioral simulations as iterated computations with
two phases: a query phase in which agents inspect their environment
to compute effects, and an update phase in which agents update their
own state.

The neighborhood property introduces two important restrictions on each of
these phases, visibility and reachability. We say that the
\emph{visible region} of an agent $a$ is the region of space containing agents that $a$ can
read from or assign effects to. Agent $a$ needs access to all the agents in
its visible region to compute its query phase. Thus a simulation in which
agents have small visible regions requires less communication
than one with very large or unbounded visible regions.  Similarly, we can
define an agent's \emph{reachable region} as the region that the agent can
move to after the update phase. This is essentially a measure of how much the
spatial distribution of agents can change between ticks. When agents have
small reachable regions, a spatial partitioning of the agents is
likely to remain balanced for several ticks. Frequently an agent's reachable
region will be a subset of its visible region (an agent can't move farther
than it can see), but this is not required.

We observe that since agents only query other agents within their visible
regions, processing a tick is similar to a \emph{spatial self-join} from the
database literature~\cite{LNE02}. We join each agent with the set of agents in
its visible region and perform the query phase using only these agents. During
the update phase, agents move to new positions within their reachable regions
and we perform a new iteration of the join during the next tick. We will use
these observations to parallelize behavioral simulations efficiently in the
MapReduce framework.

\subsection{Iterated Spatial Joins in MapReduce}
\label{sec:iterative:spatial:joins:mapreduce}

In this section, we show how to model spatial joins in MapReduce. A formal
version of this model is included in
Appendix~\ref{sec:joins:mapreduce}. MapReduce has often been criticized for
being inefficient at processing joins~\cite{YangDHRP07:MapReduceMerge} and
also inadequate for iterative computations without
modification~\cite{Twister:eScience08}. However, the spatial properties of
simulations will allow us to process them effectively without excessive
communication. Our basic strategy is to use a technique presented by Zhang et
al.~to compute a spatial join in MapReduce~\cite{Zhang09:CLUSTER}. Each map
task is responsible for spatially partitioning agents into a number of
disjoint regions, and the reduce tasks join the agents using their visible
regions.

The map tasks use a spatial partitioning function to assign each agent to a
disjoint region of space. This function might divide the space into a regular
grid or might perform some more sophisticated spatial decomposition. Each
reducer will process one such partition. The set of agents assigned to a
particular partition is called that partition's \emph{owned set}. Note that we
cannot process each partition completely independently because each agent
needs access to its visible region, and this region may intersect several
partitions. To address this, we can define the \emph{visible region} of a
partition as the region of space visible to some point in the partition. The
map task will then \emph{replicate} each agent $a$ to every partition that
contains $a$ in its visible region.

Table~\ref{tab:mapreduce} shows how the phases of the state-effect pattern are
assigned to map and reduce tasks. For simulations with only local effect
assignments, a tick $t$ begins when the first map task, $\mrmap_1^t$, assigns
each agent to a partition (distribute$^t$). Each reducer is assigned a
partition and receives every agent that falls within its owned set as well as
replicas of every agent that falls within its visible region. These are
exactly the agents necessary to process the query phase of the owned set
(query$^t$). The reducer, $\mrreduce_1^{t}$, outputs a copy of each agent it
owns after executing the query phase and updating the agent's effects. The
tick ends when the next map task, $\mrmap_1^{t+1}$, executes the update phase
(update$^{t}$).

While this two-step approach works well for simulations that have only local
effects, it does not handle non-local effect assignments. Recall that a
non-local effect is an effect assignment by some agent $a$ to some other agent
$b$ within $a$'s visible region. For example, if we were to extend the fish
simulation to include predators, then we would model a shark attack as a
non-local effect assignment from the shark to the fish. Non-local effects
require communication during the query phase. We implement this communication
using two MapReduce passes, as illustrated in Table~\ref{tab:mapreduce}. The
first map task, $\mrmap_1^{t}$, is the same as before. The first reduce task,
$\mrreduce_1^{t}$, performs only effect assignments to its local replicas
(local effect$^t$). These partially aggregated effect values are then
distributed to the partitions that own them, where they are combined by the
second reduce task, $\mrreduce_2^{t}$. This computes the final value for each
aggregate (global effect$^t$). As before, the update phase is processed in the
next map task, $\mrmap_1^{t+1}$. Note that the second map task,
$\mrmap_2^{t}$, is only necessary for distribution, but does not perform any
computation and can be eliminated in an implementation. We call this model
map-reduce-reduce.

Our map-reduce-reduce model relies heavily on the neighborhood property. The
number of replicas that each map task must create depends on the size of the
agent's visible regions, and the frequency with the agents change partitions
depends on the size of their reachable regions.

\subsection{The BRACE MapReduce Runtime}
\label{sec:mrruntime}

In this section we describe a MapReduce implementation that takes
advantage of the state-effect pattern and the neighborhood property.
We introduce BRACE, the Big Red Agent Computation Engine, our
platform for scalable behavioral simulations. BRACE includes a
MapReduce runtime especially optimized for the iterated spatial
joins discussed in Section~\ref{sec:iterative:spatial:joins}.
We have developed a new system rather than using an
existing MapReduce implementation such as Hadoop~\cite{Hadoop} because behavioral
simulations have considerably different characteristics than traditional
MapReduce applications such as search log analysis.
The goal of BRACE is to process a
very large number of ticks efficiently, and to avoid I/O or communication
overhead while providing features such as fault tolerance.
We describe the main features of
our runtime below.

\minisec{Shared-Nothing, Main-Memory Architecture} In behavioral simulations,
we expect data volumes to be modest, so BRACE executes map and reduce tasks
entirely in main memory. For example, a simulation with one million agents
whose state and effect fields occupy 1~KB on average requires roughly 1~GB of
main memory. Even larger simulations with orders of magnitude more agents will
still fit in the aggregate main memory of a cluster. Since
the query phase is computationally expensive, partition sizes are limited by
CPU cycles rather than main memory size.

\begin{figure}[!t]
\centering
\includegraphics[width=0.38\textwidth]{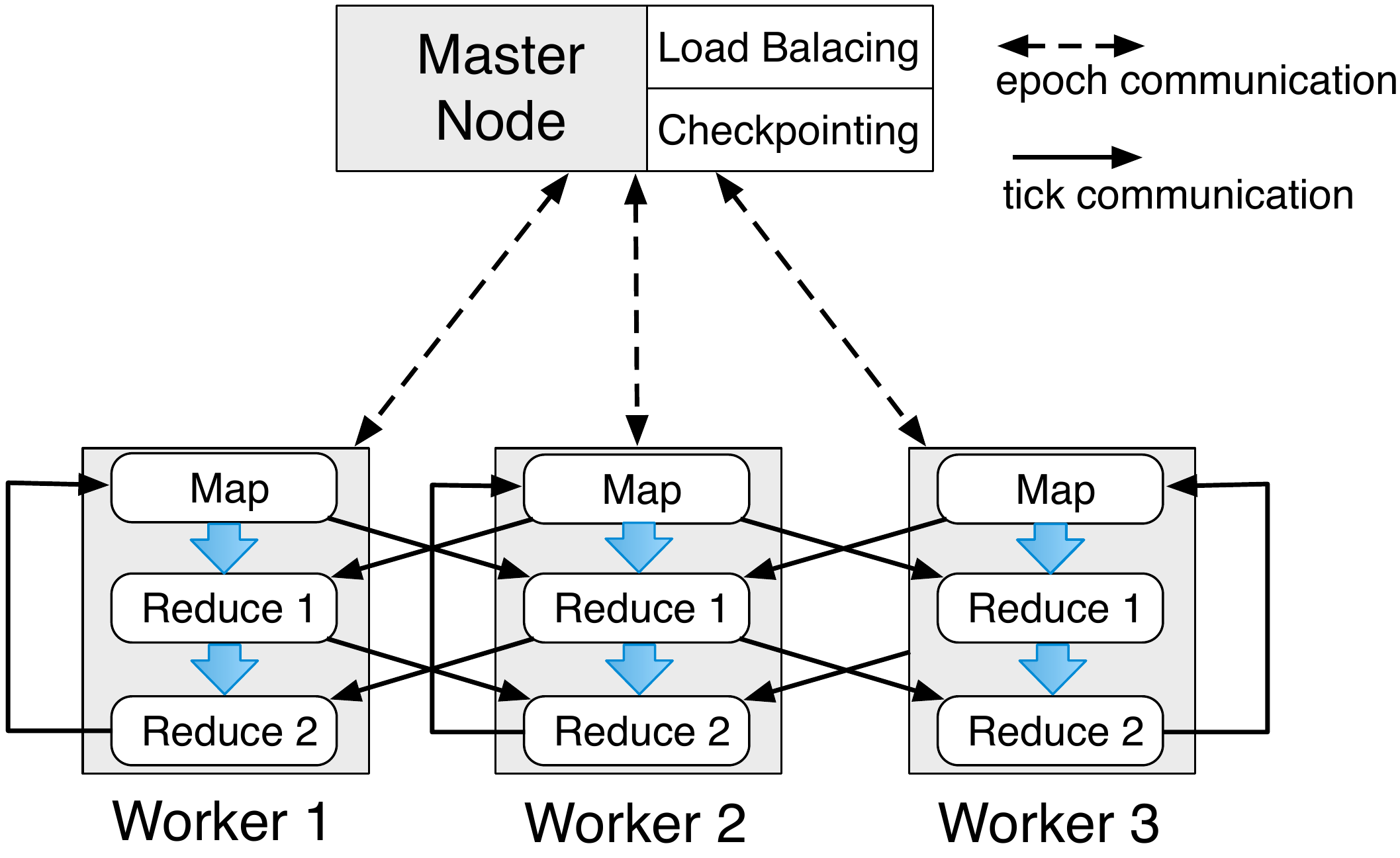}
\vspace*{-2ex} \caption{\label{fig:framework} BRACE Architecture
Overview} \vspace*{-4ex}
\end{figure}

Figure~\ref{fig:framework} shows the architecture of BRACE. As in typical
MapReduce implementations, a master node is responsible for cluster
coordination. Unlike in traditional MapReduce runtimes, BRACE's master node
only interacts with worker nodes every \emph{epoch}, which corresponds to a
fixed number of ticks. The intuition is that iterations will be quickly
processed in main memory by the workers, so we wish to amortize the overheads
related to fault tolerance and load balancing. In addition, we carefully
allocate tasks of map-reduce-reduce iterations to workers, so that we diminish
communication overheads within and across iterations.

\minisec{Fault Tolerance}
Traditional MapReduce runtimes provide fault tolerance
by storing output to a replicated file system and
automatically restarting failed tasks.
Since we expect ticks to be quite short and they are processed in main memory,
it would be prohibitively expensive to write output to stable storage between
every tick. Furthermore, since individual ticks are short, the benefit from
restarting a task is likely to be small.

We employ epoch synchronization with the master to trigger \emph{coordinated checkpoints}~\cite{EAWJ99}
of the main memory of the workers. As the master determines a pre-defined tick boundary for checkpointing,
the workers can write their checkpoints independently without global synchronization.~As we expect iterations to be short,
failures are handled by
re-execution of all iterations since the last checkpoint, a common technique in scientific simulations.
In fact, we can leverage previous literature to tune the checkpointing interval to minimize
the total expected runtime of the whole computation~\cite{Dal06}.

\minisec{Partitioning and Load Balancing} As we have observed in
Section~\ref{sec:iterative:spatial:joins:mapreduce}, bounded reachability
implies that a given spatial partitioning will remain effective for a number
of map-reduce-reduce iterations.  Our runtime uses that observation to keep
data partitioning stable over time and re-evaluates it at epoch boundaries.

At the beginning of the simulation, the master computes a
partitioning function based on the visible regions of the agents and
then broadcasts this partitioning to the worker nodes. Each worker
becomes responsible for one region of the partitioning. While agents
change partitions slowly, over time the overall spatial distribution
may change quite dramatically. For example, the distribution of
traffic on a road network is likely to be very different at morning
rush hour than at evening rush hour. This would cause certain nodes
to become overloaded if we used the same partitioning in both cases.
To address this, the master periodically requests statistics from
the workers about the number of agents in the owned region and the
communication and processing costs. The master then decides on
repartitioning by balancing the cost of redistribution with its
expected benefit. If the master decides to modify the partitioning,
it broadcasts the new partitioning to all workers. The workers then
switch to the new partitioning at a specified epoch boundary.

\minisec{Collocation of Tasks}
Since simulations run for many iterations, it
is important to avoid unnecessary communication between map and reduce
tasks. We accomplish this by collocating the map and reduce tasks for a tick
on the same node so that agents that do not switch partitions can be sent
between tasks via fast memory rather than the network. Since agents have limited
reachable regions, the owned set of each partition is likely to remain
relatively stable across ticks, and so will remain on the same node.  Agents
still need to be replicated, but their original copies do not have to be
redistributed. This idea was previously explored by the Phoenix project for
SMP systems~\cite{YooRK09:phoenix2} and the Map-Reduce-Merge project for
individual joins~\cite{YangDHRP07:MapReduceMerge}, but it is particularly
important for long-running behavioral simulations.

Figure~\ref{fig:framework} shows how collocation works when we allow non-local
effect assignments. Solid arrows indicate the flow of agents during a tick.
Each node processes a map task and two reduce tasks as described in
Section~\ref{sec:iterative:spatial:joins}. The map task replicates agents as appropriate and
sends all of the agents that remain in the same partition to the reduce task
on the same node.  The first reducer computes local effects and sends any
updated replicas to the second reduce phase at other nodes.  The final reducer
computes the final effects and sends them to the map task on the same
node. Because of the neighborhood property, many agents will be
processed on the same node during the next tick.

\section{Programming Agent Behavior}
\label{sec:programming}

In this section, we show how to offer a simple programming model for a domain scientist,
targeting the last requirement of Section~\ref{sec:behavioral:requirements}. MapReduce
is set-based; a program describes how to process all of the elements in a collection.
Simulation developers prefer to describe the behavior of their agents individually, and use
message-passing techniques to communicate between agents. This type of
programming is closer to the scientific models that describe agent behavior.

We introduce a new programming language -- BRASIL, the Big Red Agent
SImulation Language. BRASIL embodies agent centric programming with
explicit support for the state-effect pattern, and performs further
algebraic optimizations. It bridges the mental model of simulation
developers and our MapReduce processing techniques for behavioral
simulations. We provide an overview of the main features of BRASIL
(Section~\ref{sec:programming:brasil}) and describe algebraic
optimization techniques that can be applied to our scripts
(Section~\ref{sec:programming:opt}). Formal semantics for our
language as well as the proofs of theorems in this section are
provided in Appendix~\ref{sec:monad:formal}.

\subsection{Overview of BRASIL}
\label{sec:programming:brasil}

BRASIL is an object-oriented language in which each object corresponds
to an agent in the simulation. Agents in BRASIL are defined in a
class file that looks superficially similar to Java.  The programmer
can specify fields, methods, and constructors, and these can each
either be public or private. Unlike in Java, however, each field in
a BRASIL class must be tagged as either \emph{state} or
\emph{effect}. The BRASIL compiler then enforces the read-write
restrictions of the state-effect pattern over those fields as
described in Section~\ref{sec:behavioral}. Figure~\ref{Fi:Class}
illustrates an example of a simple two-dimensional fish simulation; 
in this simulation, the fish swim about randomly, but avoid each other 
through the use of imaginary repulsion ``forces''.

\begin{figure}[t]
\small
\begin{verbatim}
class Fish {
  // The fish location
  public state float x : (x+vx); #range[-1,1];
  public state float y : (y+vy); #range[-1,1];

  // The latest fish velocity
  public state float vx : vx + rand() + avoidx / count * vx;
  public state float vy : vy + rand() + avoidy / count * vy;

  // Used to update our velocity
  private effect float avoidx : sum;
  private effect float avoidy : sum;
  private effect int count : sum;

  /** The query-phase for this fish. */
  public void run() {
    // Use "forces" to repel fish too close
    foreach(Fish p : Extent<Fish>) {
       p.avoidx <- 1 / abs(x - p.x);
       p.avoidy <- 1 / abs(y - p.y);
       p.count  <- 1;
    }}}
\end{verbatim}
\vspace{-5ex}
\caption{Class for Simple Fish Behavior}
\label{Fi:Class}
\vspace{-5ex}
\end{figure}

Recall that the state-effect pattern divides the computation into a query and
an update phase. In BRASIL, the query phase for an agent class is expressed by
its \texttt{run()} method. State fields remain read-only and all effect assignments
are aggregated using the aggregate function specified at the effect field
declaration. Effect fields are similar to aggregator variables in
Sawzall~\cite{Pike05:Sawzall}; indeed, we use the Sawzall operator \texttt{<-}
for writing to effect fields.
In our fish simulation, for example, each fish repels nearby fish
via a ``force'' inversely proportional to the distance between them.
The update phase is specified as a collection of 
\emph{update rules} attached to each state field.  These rules can only 
read values of other fields in this agent. In our example, fish velocity 
vectors get updated based on the avoidance factors and then perturbed 
by a random amount.

There are some important restrictions in BRASIL's programming constructs. First,
BRASIL only supports iteration over a set or list via a \texttt{foreach}-loop.
This eliminates arbitrary looping, which is not available in algebraic database
languages.
Second, there is an interplay between \texttt{foreach}-loops and effects: effect variables can only be read outside of a
\texttt{foreach}-loop, and all assignments within a \texttt{foreach}-loop are
aggregated. This powerful restriction allows us to treat the entire program,
and not just the communication across map and reduce operations, as a
data-flow query plan.

BRASIL also has a special programming construct to enforce the neighborhood
property outlined in Section~\ref{sec:behavioral}. Every state field that
encodes spatial location may be tagged with a visibility and reachability
constraint. While it is possible to generalize this concept to arbitrary
constraints, in our current implementation the constraints are
(hyper)rectangles. For example, the constraint attached to the
\texttt{x} field in Figure~\ref{Fi:Class} means that $[-1,1]$ is the 
interval that this fish can inspect or move with respect to the \texttt{x}
coordinate. In our fish example, the \texttt{foreach}-loop will therefore
only be able to affect fish within this range. In addition, the update
rule is guaranteed to crop any changes to the \texttt{x} coordinate to at
most one unit.

Note that visibility has an interplay with agent references: it is possible
that a reference to another agent is fine initially, but violates the
visibility constraint as that other agent moves relative to the one holding
the reference.  For that reason, BRASIL employs
\emph{weak reference semantics} for agent references,
similar to weak references in Java.  If another agent moves outside of the
visible region, then all references to it will resolve to \nil.

Note that this gives a different semantics for visibility than the one
present in Section~\ref{sec:mapreduce}. BRASIL uses visibility to determine
how agent references are resolved, while the BRACE runtime uses visibility to
determine agent replication and communication.  The BRASIL semantics are
preferable for a developer, because they are easy to understand
and hide MapReduce details. Fortunately, as we prove formally in
Appendix~\ref{sec:monad:formal}, these are equivalent.

\vspace{-1ex}
\begin{theoremreference}{\ref{T:Visible}}
The BRASIL semantics for visibility and the BRACE implementation of visibility
are equivalent.
\end{theoremreference}
\vspace{-1ex}

While programming features in BRASIL may seem unusual, everything in the
language follows from the state-effect pattern and neighborhood property.  As
these are natural properties of behavioral simulations, programming these
simulations becomes relatively straightforward.  Indeed, a large part of our
traffic simulation in Section~\ref{sec:exps:setup} was implemented by a domain
scientist.

\subsection{Optimization}
\label{sec:programming:opt}

We compile BRASIL into a well-understood data-flow language. In our previous
work on computer games, we used the relational algebra to represent
our data flow~\cite{White07:Epic}. However, for distributed simulations,
we have found the monad algebra~\cite{BNTW1995,Koch2005,Paradaens88,TBW1992}
-- the theoretical foundation for XQuery~\cite{Koch2005} -- to be a much more
appropriate fit. In particular, the monad algebra has a $\map$ primitive for
descending into the components of its nested data model; this makes it a
much more natural companion to MapReduce than the relational algebra.

We present the formal translation to the monad algebra
in Appendix~\ref{sec:monad:overview}, together with several theorems
regarding its usage in optimization. Most of these optimizations are the same
as those that would be present in a relational algebra query plan: algebraic
rewrites and automatic indexing.  In fact, any monad algebra expression on
flat tables can be converted to an equivalent relational algebra expression
and {\em vice versa}~\cite{Paradaens88}; rewrites and indexing on the relational
form carry back into the monad algebra form. In particular, many of the
techniques used by Pathfinder~\cite{Pathfinder} to process XQuery with
relational query plans apply to the monad algebra.

\newenvironment{smallenv}{\small}{}
\minisec{Effect Inversion}
An important optimization that is unique to our framework involves the
assignment of non-local effects. If non-local effect assignments can
be eliminated, then we are able to process our MapReduce computations
with one MapReduce pass instead of two (Section~\ref{sec:mapreduce}).
Consider again the program of Figure~\ref{Fi:Class}. We may rewrite
its \texttt{foreach}-loop as

\vspace{-1.5ex}
\begin{smallenv}
\begin{verbatim}
  foreach(Fish p : Extent<Fish>) {
       avoidx <- 1 / abs(p.x - x);
       avoidy <- 1 / abs(p.y - y);
       count <- 1;
  }
\end{verbatim}
\end{smallenv}
\vspace{-1.5ex}
This rewritten expression does not change the results of the simulation, but
only assigns effects locally.

There are two main results regarding effect inversion.  While we prove them
formally in Appendix~\ref{sec:monad:inversion}, we can state them informally
here, and give some intuition regarding their usage.

\vspace{-2ex}
\begin{theoremreference}{\ref{T:Invert}}
Effect inversion is always possible if there are no visibility constraints.
\end{theoremreference}
\vspace{-1ex}

When there are no visibility constraints, each agent can read any other
agent. Hence, we can produce a script where an agent simulates the behavior
of other agents, checks for effects that are assigned to itself, and then
assign them locally.  Unfortunately, this new script is clearly much more
computationally expensive.  Hence, to be useful in practice, we need to
use other optimization techniques to simplify the new script.

Effect inversion is not always possible when the simulation has visibility
constraints. Intuitively, an agent may use non-local effects to act as
a communication proxy between two other agents that are not visible to
one another.  However, the state-effect pattern ensures that an agent
can only receive information from another agent if they have a third
(not necessarily distinct) agent that is visible to both.  This provides
us with another result.

\vspace{-2ex}
\begin{theoremreference}{\ref{T:Visible:Invert}}
If the visibility constraint on a script is a distance bound, there
is an equivalent script with a constraint at most twice that distance bound 
that has only local effect assignments.
\end{theoremreference}
\vspace{-1ex}

Increasing the visibility bound increases the number of replicas that have
to store at each node.  Hence this optimization eliminates the extra
communication round at the cost of more information to be sent during
the remaining communication round.

\section{Experiments}
\label{sec:exps}

In this section, we present experimental results using two distinct
real-world behavioral simulation models we have coded using BRACE.
We focus on the following:
(i)~We validate the effectiveness of the BRASIL optimizations introduced in Section~\ref{sec:programming:opt}.
In fact, these optimizations allow us to
approach the efficiency of hand-optimized simulation code (Section~\ref{sec:exps:single:node});
(ii)~We evaluate BRACE's MapReduce runtime implementation over a cluster of machines.
We measure simulation scale-up via spatial data partitioning as well as load balancing (Section~\ref{sec:exps:scalability}).

\subsection{Setup}
\label{sec:exps:setup}

\minisec{Implementation}
The prototype BRACE MapReduce runtime is implemented in C++ and uses MPI for inter-node communication.
Our BRASIL compiler is written in Java and directly generates C++ code
that can be compiled with the runtime.

Our prototype includes a generic KD-tree based spatial index capability~\cite{Bentley90:KDTree}.
We use a simple rectilinear grid partitioning scheme, which assigns each grid cell
to a separate slave node,
A one-dimensional load balancer periodically receives statistics from the slave nodes,
including computational load and number of owned agents;
from these it heuristically computes a new partition
trying to balance improved performance against estimated migration cost.
Checkpointing is not yet
integrated into BRACE's implementation.
We believe this is not a problem for the cluster sizes we evaluate, given the low likelihood of worker
failure during a computation.

We plan to integrate more sophisticated algorithms for all these components in
future work.
But our current prototype already demonstrates good performance and automatic
scaling of realistic behavioral simulations written in BRASIL.

\minisec{Simulation Workloads}
We have implemented realistic traffic and fish school
simulations in BRASIL. The traffic simulation includes the lane-changing and
acceleration models of the state-of-the-art, open-source MITSIM traffic
simulator~\cite{YKB99}. MITSIM is a single-node program, so we
compare its performance against our BRASIL reimplementation of
its model also running on a single node.
We simulate a linear segment of highway, and
scale-up the size of the problem by extending the length of the segment.

The fish simulation implements a recent model of
information flow in animal groups~\cite{CKFL05}.
In this model the ``ocean'' is unbounded, and the spatial distribution of fish
changes dramatically as ``informed individuals'' guide the movements
of others in the school.

Neither of these simulations uses non-local effect assignments;
therefore we need only a single reducer per node.
To evaluate our effect inversion optimization,
we modified the fish simulation to create a predator simulation
that uses non-local assignments. It is similar in spirit to artificial
society simulations~\cite{EA96}.
Appendix~\ref{sec:exps:model:details} describes these simulation models
in more detail.
We measure total simulation time in our single-node experiments and
tick throughput (agent ticks per second) when scaling up over multiple nodes.
In all measurements we eliminate start-up transients by
discarding initial ticks until a stable tick rate is achieved.

\minisec{Hardware Setup}
We ran all of our experiments in the Cornell Web Lab cluster~\cite{AAD+06}. The cluster
contains 60 nodes interconnected by a pair of 1 gigabit/sec Port Summit X450a Ethernet Switches.
Each node has two Quad Core Intel Xeon, 2.66GHz, processors with 4MB cache each and
16 GB of main memory.

\subsection{BRASIL Optimizations}
\label{sec:exps:single:node}

\begin{figure*}[tb]
    \begin{minipage}{0.33\linewidth}
        \centerline{\includegraphics[width=0.85\linewidth]{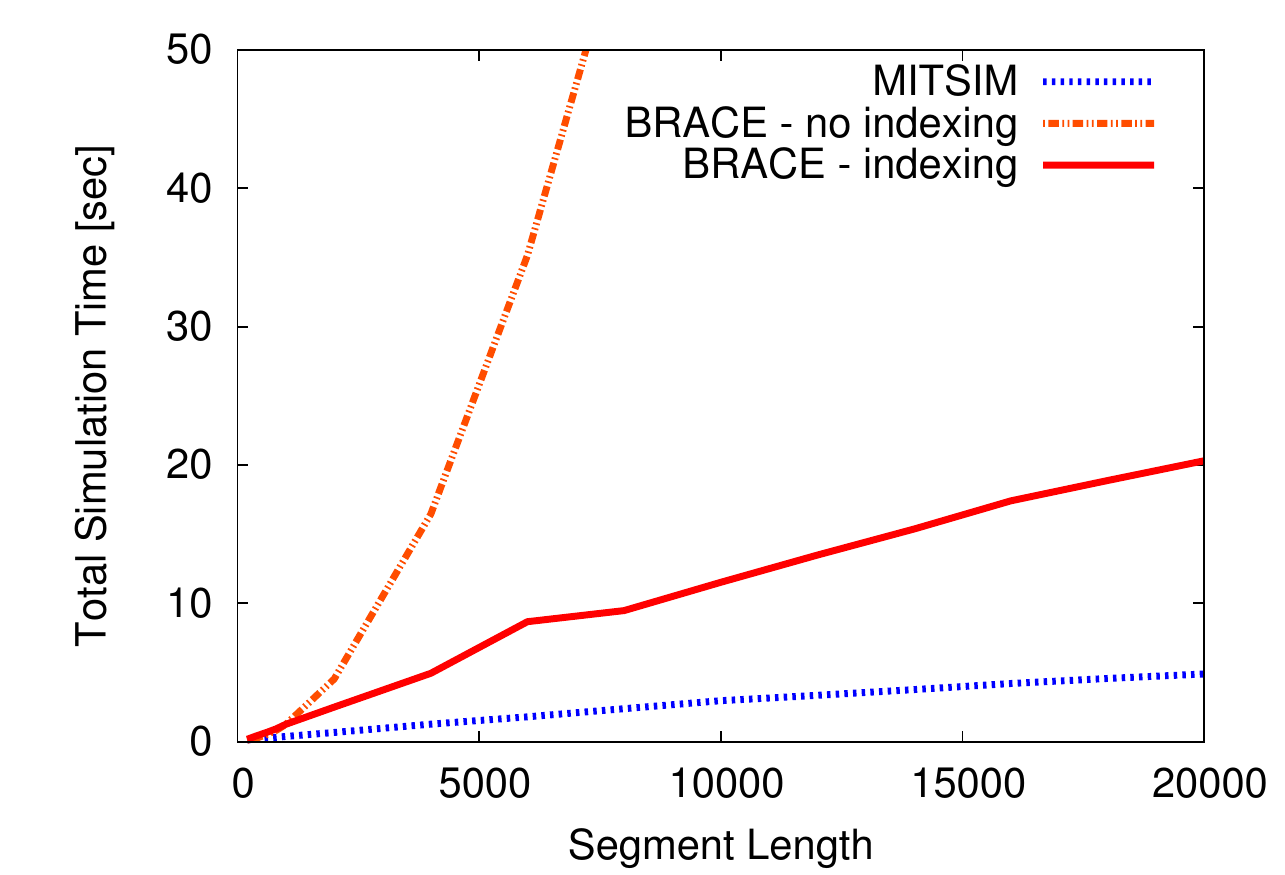}}
        \vspace{-1.5ex}
        \caption{Traffic: Indexing vs. Seg.~Length} \label{fig:traffic:segment:length}
    \end{minipage} \hfill
    \begin{minipage}{0.31\linewidth}
        \centerline{\includegraphics[width=0.9\linewidth]{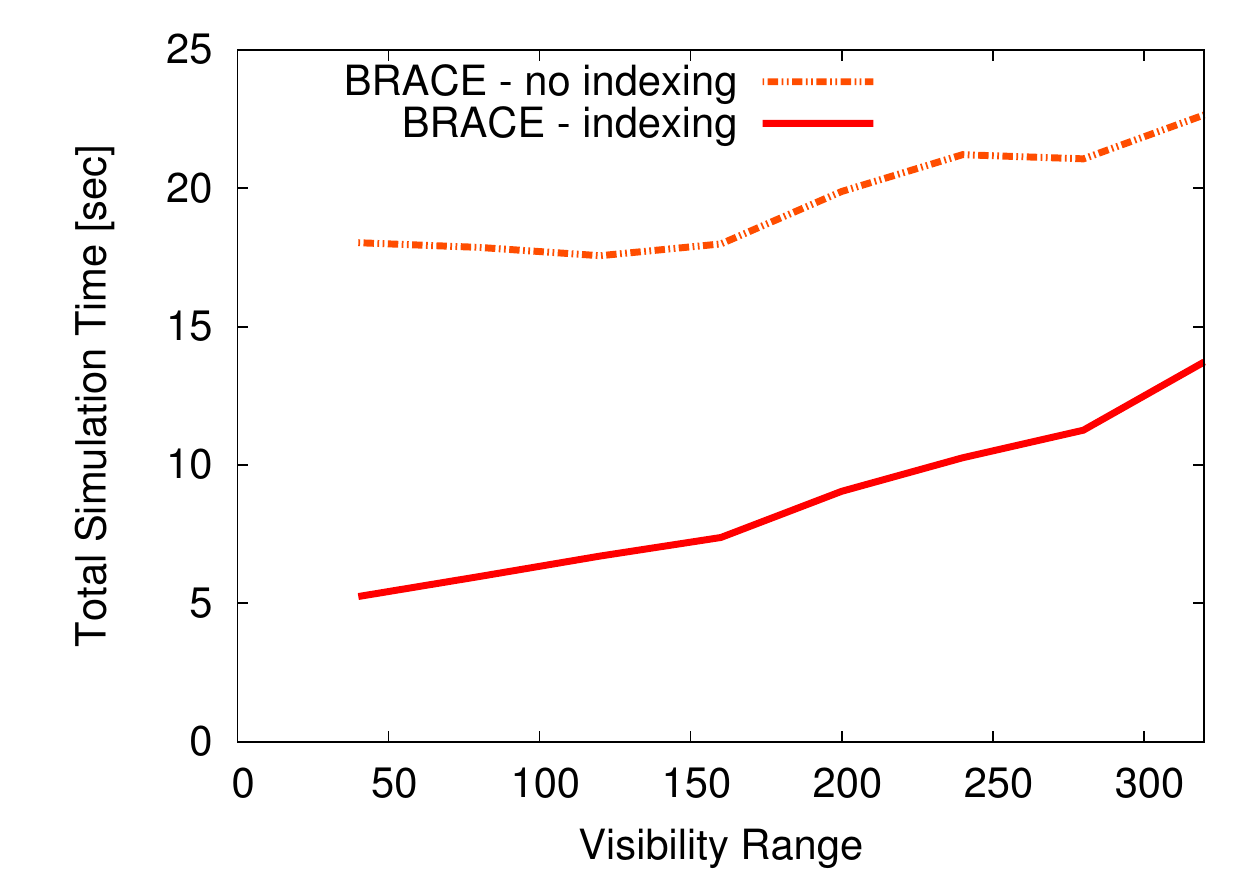}}
        \vspace{-1.5ex}
        \caption{Fish: Indexing vs. Visibility} \label{fig:fish:visibility}
    \end{minipage} \hfill
    \begin{minipage}{0.34\linewidth}
        \centerline{\includegraphics[width=0.9\linewidth]{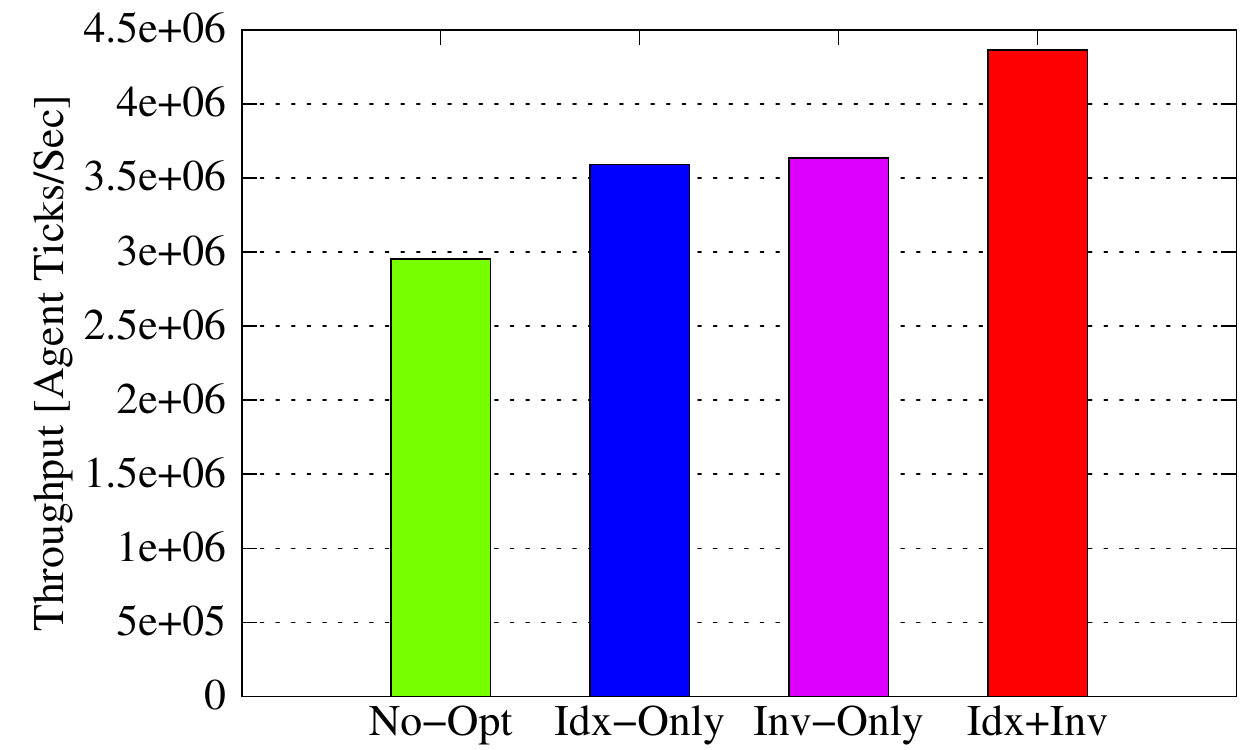}}
        \vspace{-0.5ex}
        \caption{Predator: Effect Inversion} \label{fig:predator:inversion}
    \end{minipage} \hfill
    \vspace{-1ex}
\end{figure*}
\begin{figure*}[tb]
    \begin{minipage}{0.327\linewidth}
        \centerline{\includegraphics[width=0.9\linewidth]{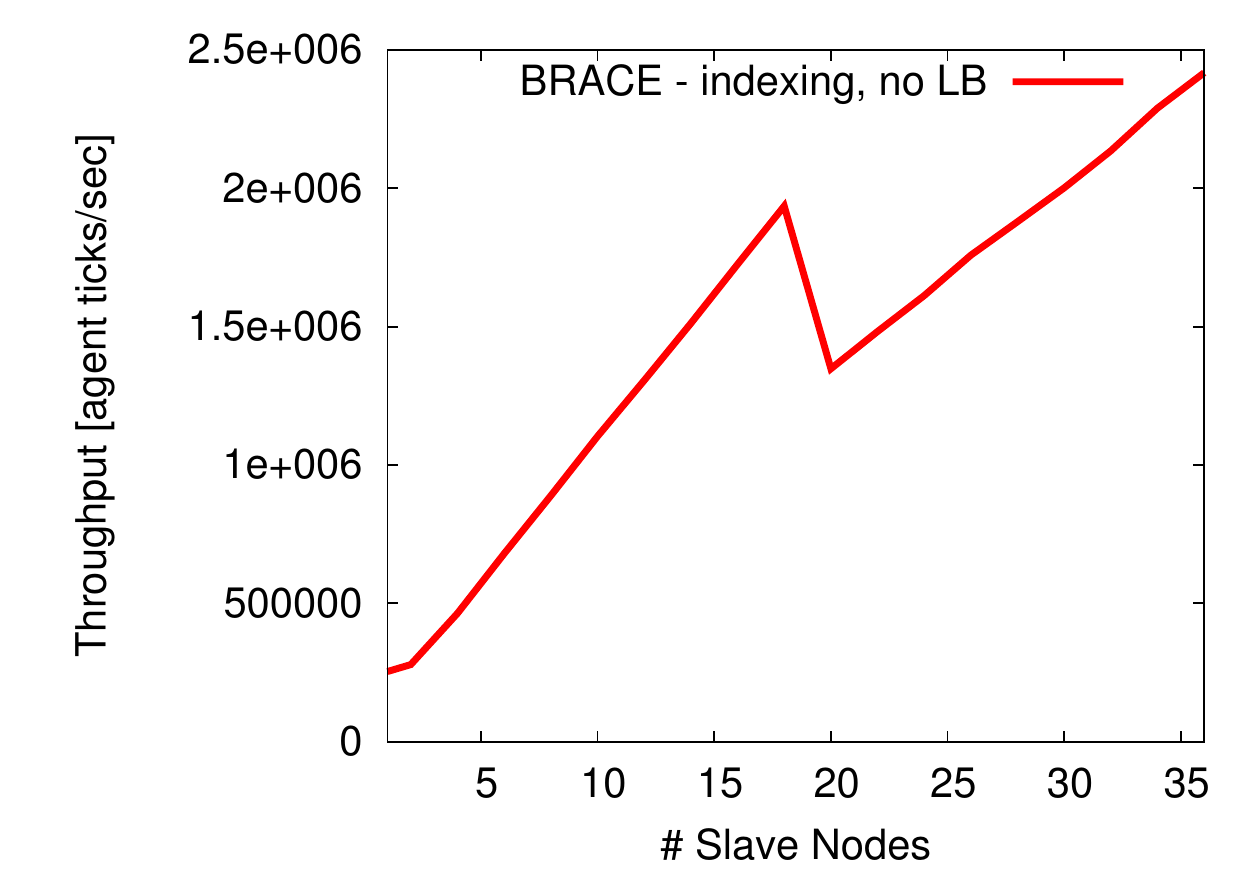}}
        \vspace{-1.5ex}
        \caption{Traffic: Scalability} \label{fig:traffic:scalability}
    \end{minipage} \hfill
        \begin{minipage}{0.327\linewidth}
        \centerline{\includegraphics[width=0.9\linewidth]{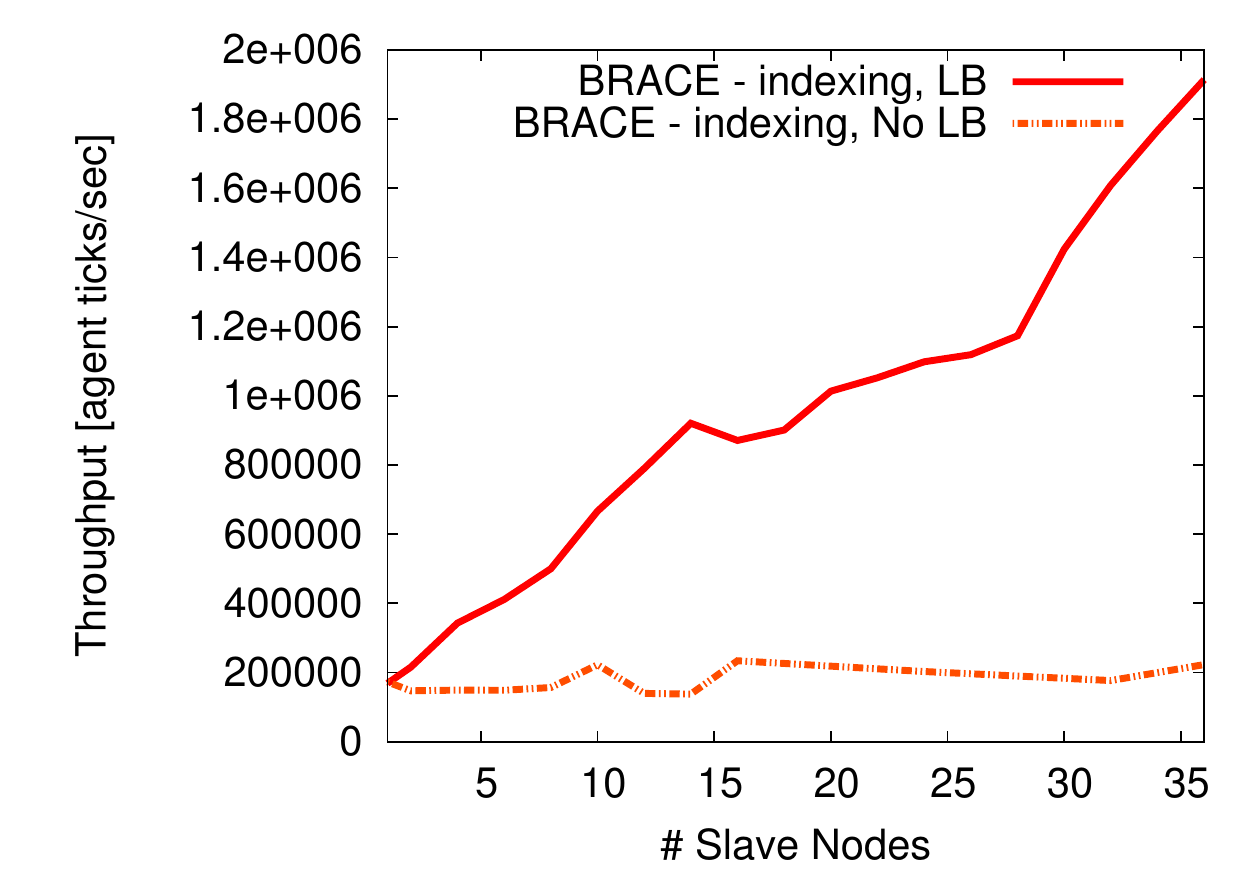}}
        \vspace{-1.5ex}
        \caption{Fish: Scalability} \label{fig:fish:scalability}
    \end{minipage} \hfill
    \begin{minipage}{0.327\linewidth}
        \centerline{\includegraphics[width=0.9\linewidth]{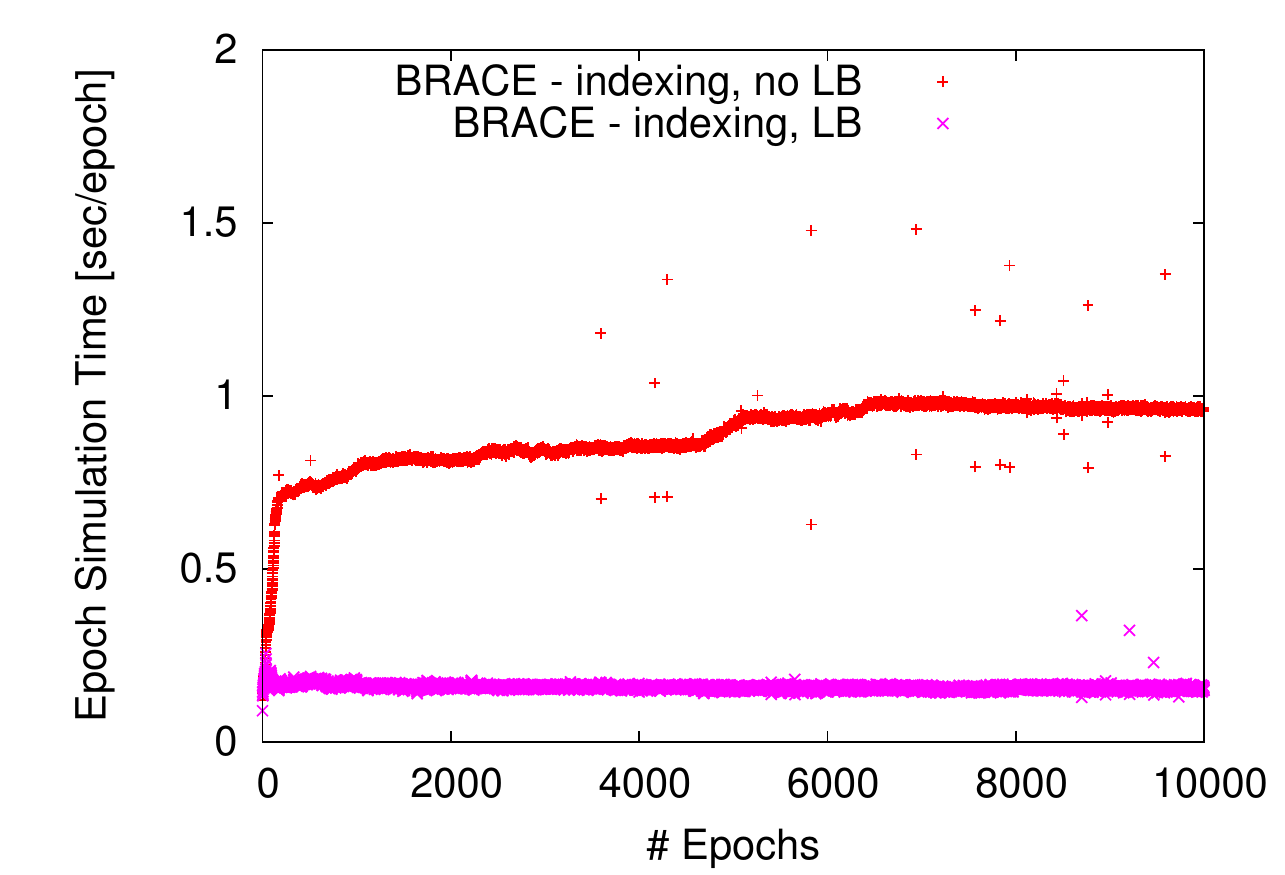}}
        \vspace{-1.5ex}
        \caption{Fish: Load Balancing} \label{fig:fish:load:balancing}
    \end{minipage} \hfill
    \vspace{-3ex}
\end{figure*}

We first compare the single-node performance of our traffic simulation to MITSIM.
The main optimization in this case is spatial indexing.
For a meaningful comparison, we validate the aggregate traffic statistics
produced by our BRASIL reimplementation
against those produced by MITSIM. Details of our validation procedure
appear in Appendix~\ref{sec:exps:model:details}.

Figure~\ref{fig:traffic:segment:length} compares the performance
of MITSIM against our BRACE reimplementation of its model.
Without spatial indexing, BRACE's performance
degrades quadratically with increasing segment length.
This is expected: In this simulation, the number of
agents grows linearly with segment length; and without indexing every vehicle enumerates
and tests every other vehicle during each tick.
With spatial indexing enabled, BRACE converts this behavior to an orthogonal range query,
resulting in log-linear growth, as confirmed by Figure~\ref{fig:traffic:segment:length}.
BRACE's spatial indexing achieves performance that is comparable, but
inferior to MITSIM's hand-coded nearest-neighbor implementation.
Our optimization techniques generalize to nearest-neighbor indexing, and
adding this to BRACE is planned future work.
With this enhancement, we expect to achieve performance parity with MITSIM.

We observed similar log-linear versus quadratic performance when scaling up the number of agents
in the fish simulation in a single node. We thus omit these results.
When we increase the visibility
range, however, the performance of the KD-tree indexing
decreases, since more results are produced for each
index probe (Figure~\ref{fig:fish:visibility}). Still,
indexing yields from two to three times improvement over a range
of visibility values.

In addition to indexing, we also measure the performance gain of eliminating
non-local effect assignments through effect inversion. Only the predator simulation has non-local
effect assignments, so we report results exclusively on this model. We run two versions of the predator simulation,
one with non-local assignments and the other with non-local assignments eliminaed by effect inversion.
We run both scripts with and without KD-tree indexing enabled on 16 slave nodes, and with BRACE configured to
have two reduce passes in the first case and only a single reduce pass in the second case. Our results are displayed
in Figure~\ref{fig:predator:inversion}. Effect inversion increases agent tick throughput
from 3.59 million (Idx-Only) to 4.36 million (Idx+Inv) with KD-tree indexing enabled, and from 2.95 million (No-Opt) to 3.63 million (Inv-Only)
with KD-tree indexing disabled. This represents an improvement of more than 20\% in each case,
demonstrating the importance of this optimization.

\subsection{Scalability of the BRACE Runtime}
\label{sec:exps:scalability}

We now explore the parallel performance of BRACE's MapReduce runtime on
the traffic and fish school simulations as we scale the number of slave
nodes from 1 to 36.
The size of both simulations is scaled linearly with the number of slaves, so we measure
scale-up rather than speed-up.

The traffic simulation represents a linear road segment with constant
up-stream traffic.  As a result, the distribution on the segment is
nearly uniform, and load is always balanced among the nodes.
Therefore, throughput
grows linearly with the number of nodes even if load
balancing is disabled (Figure~\ref{fig:traffic:scalability}).
The sudden drop around 20 nodes is an artifact of the
multi-switch architecture of the Web Lab cluster on which
we ran our experiments: Performance degrades when not all nodes
can be chosen on the same switch.

In the fish simulation, fish move in
schools led by informed individuals~\cite{CKFL05}.
In our experiment, there are two classes of informed individuals,
trying to move in two different fixed directions.
The spatial distribution of fish, and consequently the load on each slave node,
changes over time. Figure~\ref{fig:fish:scalability} shows the scalability
of this simulation with and without load balancing.
Without load balancing,
two fish schools eventually form in
nodes at the extremes of simulated space, while the load at all other
nodes falls to zero.
With load balancing,
partition grids are adjusted periodically to assign
roughly the same number of fish to each node,
so throughput increases linearly with the number of nodes.

Figure~\ref{fig:fish:load:balancing} confirms this.
With load balancing enabled, the time per simulation epoch is essentially flat;
with load balancing disabled, the epoch time gradually increases
to a value that reflects all agents being simulated by only two nodes.

\section{Related Work}
\label{sec:related}

Much of the existing work on behavioral simulations has focused on
task-parallel discrete event simulation
systems~\cite{CM78,Nic93,DFP+94,Mat93,ZKK04}. Such systems employ
either conservative or optimistic protocols to detect conflicts and
preempt or rollback simulation tasks. The strength of local
interactions and the time-stepped model used in behavioral
simulations lead to unsatisfactory performance, as shown in attempts
to adapt discrete event simulators to agent-based
simulations~\cite{HKX+06,HML04}.

Platforms specifically targeted at agent-based models have been developed,
such as Swarm~\cite{MBLA96}, Mason~\cite{LCP+05},
and Player/Stage~\cite{GVH03}.
These platforms offer tools to facilitate simulation programming,
but most rely on message-passing abstractions with implementations
inspired by discrete event simulators,
so they suffer in terms of performance and scalability.
A few recent systems attempt to distribute agent-based simulations
over multiple nodes without exploiting applications properties such as
visibility and time-stepping~\cite{HBD08,SSCD06}.
This leads either to poor scale-up
or to unrealistic restrictions on agent interactions.

Regarding join processing with MapReduce, Zhang et
al.~\cite{Zhang09:CLUSTER} compute spatial joins by an approach
similar to ours when only local effect assignments are allowed.
Their mapper partitions are derived using spatial index techniques
rather than by reasoning about the application program, and they do
not discuss {\em iterated} joins, an important consideration for our
work. Locality optimizations have been studied for MapReduce on
SMPs~\cite{YooRK09:phoenix2} and for
MapReduceMerge~\cite{YangDHRP07:MapReduceMerge}; in this paper we
consider the problem in a distributed main memory MapReduce runtime.

Data-driven parallelization techniques have also been studied in parallel databases~\cite{Dewitt90:Gamma,Graefe90:volcano} 
and data parallel programming languages~\cite{Blelloch96:nesl,SchwartzDSD86:setl}. 
However, it is unnatural and inefficient to use either SQL or set-operations
exclusively to express flexible computation over individuals
as required for behavioral simulations.

Given this situation, behavioral simulation developers have resorted to hand-coding parallel implementations of specific simulation models~\cite{EFSC09,NR01}, 
or trading model accuracy for scalability and ease of implementation~\cite{CBN03,Wen08}. To the best of our knowledge, our approach is the first to bring both programmability and scalability through data parallelism to behavioral simulations.

\section{Conclusions}
\label{sec:conclusions}

In this paper we show how MapReduce can be used to scale behavioral
simulations across clusters by abstracting these simulations as
iterated spatial joins. To efficiently distribute these joins we
leverage several properties of behavioral simulations to get a
shared-nothing, in-memory MapReduce framework called BRACE, which
exploits collocation of mappers and reducers to bound communication
overhead. In addition, we present a new scripting language for our
framework called BRASIL, which hides all the complexities of
modeling computations in MapReduce and parallel programming from
domain scientists. BRASIL scripts can be compiled into our MapReduce
framework and allow for algebraic optimizations in mappers and
reducers. We perform an experimental evaluation with two real-world
behavioral simulations to show that BRACE has nearly linear
scalability as well as single-node performance comparable to a
hand-coded simulator.

{\bf Acknowledgments.} This material is based upon work supported
by the New York State Foundation for Science, Technology, and
Innovation under Agreement C050061, by the National Science Foundation
under Grants 0725260 and 0534404, by the iAd Project funded by the
Research Council of Norway, by the AFOSR under Award 
FA9550-10-1-0202, and by Microsoft. Any opinions, findings and
conclusions or recommendations expressed in this material are those of
the authors and do not necessarily reflect the views of the funding
agencies.

\small
\bibliographystyle{abbrv}
\bibliography{main.bbl}

\begin{thebibliography}{10}

\vspace{1ex}

\bibitem{AmazonEC2}
{Amazon Elastic Compute Cloud (Amazon EC2)}.
\newblock {\small http://http://aws.amazon.com/ec2/}.

\bibitem{AAD+06}
W.~Arms, S.~Aya, P.~Dmitriev, B.~Kot, R.~Mitchell, and L.~Walle.
\newblock Building a research library for the history of the web.
\newblock In {\em Proc. Joint Conference on Digital Libraries}, 2006.

\bibitem{Bentley90:KDTree}
J.~Bentley.
\newblock K-d trees for semidynamic point sets.
\newblock In {\em Proc. SGC}, 1990.

\bibitem{Blelloch96:nesl}
G.~Blelloch.
\newblock Programming parallel algorithms.
\newblock {\em {CACM}}, 39(3):85--97, 1996.

\bibitem{Buhl06locusts}
J.~Buhl, D.~Sumpter, I.~Couzin, J.~Hale, E.~Despland, E.~Miller, and
  S.~Simpson.
\newblock From disorder to order in marching locusts.
\newblock {\em Science}, 312(5778):1402--1406, 2006.

\bibitem{BNTW1995}
P.~Buneman, S.~Naqvi, V.~Tannen, and L.~Wong.
\newblock Principles of programming with complex objects and collection types.
\newblock {\em Theor.\ Comput.\ Sci.}, 149(1):3--48, 1995.

\bibitem{CBN03}
N.~Cetin, A.~Burri, and K.~Nagel.
\newblock A large-scale agent-based traffic microsimulation based on queue
  model.
\newblock In {\em {Proc. Swiss Transport Research Conference (STRC)}}, 2003.

\bibitem{CM78}
K.~Chandy and J.~Misra.
\newblock Distributed simulation: A case study in design and verification of
  distributed programs.
\newblock {\em IEEE Transactions on Software Engineering}, 5(5):440--452, 1978.

\bibitem{CBTRL06}
C.~Choudhury, M.~Ben-Akiva, T.~Toledo, A.~Rao, and G.~Lee.
\newblock {NGSIM} cooperative lane changing and forced merging model.
\newblock Technical report, Federal Highway Administration, 2006.
\newblock FHWA-HOP-07-096.

\bibitem{CTB04}
C.~Choudhury, T.~Toledo, and M.~Ben-Akiva.
\newblock {NGSIM} freeway lane selection model.
\newblock Technical report, Federal Highway Administration, 2004.
\newblock FHWA-HOP-06-103.

\bibitem{BRASILManual}
{Cornell Database Group}.
\newblock The brasil language specification.
\newblock http://www.cs.cornell.edu/bigreddata/games/simulations.php, 2010.

\bibitem{CKFL05}
I.~Couzin, J.~Krause, N.~Franks, and S.~Levin.
\newblock Effective leadership and decision-making in animal groups on the
  move.
\newblock {\em Nature}, 433(7025):513--516, 2005.

\bibitem{Dal06}
J.~Daly.
\newblock A higher order estimate of the optimum checkpoint interval for
  restart dumps.
\newblock {\em Future Generation Computer Systems}, 22(3):303--312, 2006.

\bibitem{DFP+94}
S.~Das, R.~Fujimoto, K.~Panesar, D.~Allison, and M.~Hybinette.
\newblock {GTW}: A time warp system for shared memory multiprocessors.
\newblock In {\em Proc. Winter Simulation Conference (WSC)}, 1994.

\bibitem{Dean04:MapReduce}
J.~Dean and S.~Ghemawat.
\newblock Mapreduce: Simplified data processing on large clusters.
\newblock In {\em Proc. OSDI}, pages 137--150, 2004.

\bibitem{Dewitt90:Gamma}
D.~Dewitt, S.~Ghandeharizadeh, D.~Schneider, A.~Bricker, H.~Hsiao, and
  R.~Rasmussen.
\newblock The gamma database machine project.
\newblock {\em IEEE Trans. on Knowledge and Data Engineering}, 2(1):44--62,
  1990.

\bibitem{Twister:eScience08}
J.~Ekanayake, S.~Pallickara, and G.~Fox.
\newblock Mapreduce for data intensive scientific analyses.
\newblock In {\em Proc. eScience}, 2008.

\bibitem{EAWJ99}
M.~Elnozahy, L.~Alvisi, Y.-M. Wang, and D.~B. Johnson.
\newblock {A Survey of Rollback-Recovery Protocols in Message-Passing Systems}.
\newblock {\em ACM Computing Surveys}, 34(3):375--408, 2002.

\bibitem{EFSC09}
U.~Erra, B.~Frola, V.~Scarano, and I.~Couzin.
\newblock An efficient gpu implementation for large-scale individual based
  simulation of collective behavior.
\newblock In Press.

\bibitem{GVH03}
B.~Gerkey, R.~Vaughan, and A.~Howard.
\newblock The player/stage project: Tools for multi-robot and distributed
  sensor systems.
\newblock In {\em {Proc. Int. Conf. on Advanced Robotics (ICAR)}}, 2003.

\bibitem{Graefe90:volcano}
G.~Graefe.
\newblock Encapsulation of parallelism in the volcano query processing system.
\newblock In {\em Proc. SIGMOD}, 1990.

\bibitem{Gru06}
D.~Gr\"{u}nbaum.
\newblock Behavior: Align in the sand.
\newblock {\em Science}, 312(5778):1320--1322, 2006.

\bibitem{Hadoop}
{Hadoop}.
\newblock {\small http://hadoop.apache.org/}.

\bibitem{HBD08}
D.~Ho, T.~Bui, and N.~Do.
\newblock Dividing agents on the grid for large scale simulation.
\newblock {\em Intelligent Agents and Multi-Agent Systems}, 5357:222--230,
  2008.

\bibitem{HML04}
B.~Horling, R.~Mailler, and V.~Lesser.
\newblock Farm: A scalable environment for multi-agent development and
  evaluation.
\newblock {\em {Software Engineering for Multi-Agent Systems II}},
  2940:364--367, 2004.

\bibitem{HKX+06}
M.~Hybinette, E.~Kraemer, Y.~Xiong, G.~Matthews, and J.~Ahmed.
\newblock {SASSY}: A design for a scalable agent-based simulation system using
  a distributed discrete event infrastructure.
\newblock In {\em Proc. Winter Simulation Conference (WSC)}, 2006.

\bibitem{EA96}
{Joshua Epstein and Robert Axtell}.
\newblock {\em Growing Artificial Societies: Social Science from the Bottom
  Up}.
\newblock {The MIT Press}, 1996.

\bibitem{Koch2005}
C.~Koch.
\newblock On the complexity of nonrecursive xquery and functional query
  languages on complex values.
\newblock In {\em Proc. PODS}, 2005.

\bibitem{LCP+05}
S.~Luke, C.~Cioffi-Revilla, L.~Panait, K.~Sullivan, and G.~Balan.
\newblock {MASON}: A multiagent simulation environment.
\newblock {\em Simulation}, 81, 2005.

\bibitem{LNE02}
G.~Luo, J.~Naughton, and C.~Ellmann.
\newblock A non-blocking parallel spatial join algorithm.
\newblock In {\em Proc. ICDE}, 2002.

\bibitem{Mat93}
F.~Mattern.
\newblock Efficient algorithms for distributed snapshots and global virtual
  time approximation.
\newblock {\em {Journal of Parallel and Distributed Computing}},
  18(4):423--434, 1993.

\bibitem{MBLA96}
N.~Minar, R.~Burkhart, C.~Langton, and M.~Askenazi.
\newblock The swarm simulation system: A toolkit for building multi-agent
  simulations.
\newblock Working Papers 96-06-042, Santa Fe Institute, 1996.

\bibitem{NR01}
K.~Nagel and M.~Rickert.
\newblock Parallel implementation of the transims micro-simulation.
\newblock {\em {Parallel Computing}}, 27(12):1611--1639, 2001.

\bibitem{Nic93}
D.~Nicol.
\newblock The cost of conservative synchronization in parallel discrete event
  simulations.
\newblock {\em Journal of the ACM}, 40(2):304--333, 1993.

\bibitem{Paradaens88}
J.~Paredaens and D.~van Gucht.
\newblock Possibilities and limitations of using flat operators in nested
  algebra expressions.
\newblock In {\em Proc. PODS}, 1998.

\bibitem{Pike05:Sawzall}
R.~Pike, S.~Doward, R.~Griesemer, and S.~Quinlan.
\newblock Interpreting the data: Parallel analysis with {Sawzall}.
\newblock {\em Scientific Programming Journal}, 13(4):227--204, 2005.

\bibitem{SSCD06}
M.~Scheutz, P.~Schermerhorn, R.~Connaughton, and A.~Dingler.
\newblock {SWAGES}--an extendable parallel grid experimentation system for
  large-scale agent-based alife simulations.
\newblock In {\em Proc. Artificial Life X}, 2006.

\bibitem{SL09}
D.~Schrank and T.~Lomax.
\newblock The 2009 urban mobility report.
\newblock Technical report, Texas Transportation Institute, 2009.

\bibitem{SchwartzDSD86:setl}
J.~T. Schwartz, R.~B. Dewar, E.~Schonberg, and E.~Dubinsky.
\newblock {\em Programming with sets; an introduction to SETL}.
\newblock Springer-Verlag New York, Inc., New York, NY, USA, 1986.

\bibitem{Suciu95}
D.~Suciu.
\newblock {\em Parallel programming languages for collections}.
\newblock PhD thesis, University of Pennsylvania, 1995.

\bibitem{TBW1992}
V.~Tannen, P.~Buneman, and L.~Wong.
\newblock Naturally embedded query languages.
\newblock In {\em Proc. ICDT}, 1992.

\bibitem{Pathfinder}
J.~Teubner.
\newblock {\em Pathfinder: XQuery Compilation Techniques for Relational
  Database Targets}.
\newblock PhD thesis, Technische Universit\"{a}t M\"{u}nchen, 2006.

\bibitem{Wen08}
Y.~Wen.
\newblock {\em Scalability of Dynamic Traffic Assignment}.
\newblock PhD thesis, Massachusetts Institute of Technology, 2008.

\bibitem{White07:Epic}
W.~White, A.~Demers, C.~Koch, J.~Gehrke, and R.~Rajagopalan.
\newblock Scaling games to epic proportions.
\newblock In {\em Proc. SIGMOD}, 2007.

\bibitem{White08:Sandbox}
W.~White, B.~Sowell, J.~Gehrke, and A.~Demers.
\newblock Declarative processing for computer games.
\newblock In {\em Proc. SIGGRAPH Sandbox Symposium}, 2008.

\bibitem{YangDHRP07:MapReduceMerge}
H.-C. Yang, A.~Dasdan, R.-L. Hsiao, and D.~S. Parker.
\newblock Map-reduce-merge: simplified relational data processing on large
  clusters.
\newblock In {\em Proc. SIGMOD}, 2007.

\bibitem{YKB99}
Q.~Yang, H.~Koutsopoulos, and M.~Ben-Akiva.
\newblock A simulation laboratory for evaluating dynamic traffic management
  systems.
\newblock In {\em {Annual Meeting of Transportation Research Board}}, 1999.
\newblock TRB Paper No. 00-1688.

\bibitem{YooRK09:phoenix2}
R.~M. Yoo, A.~Romano, and C.~Kozyrakis.
\newblock Phoenix rebirth: Scalable mapreduce on a large-scale shared-memory
  system.
\newblock In {\em IISWC}, 2009.

\bibitem{Zhang09:CLUSTER}
S.~Zhang, J.~Han, Z.~Liu, K.~Wang, and Z.~Xu.
\newblock Sjmr: Parallelizing spatial join with mapreduce on clusters.
\newblock In {\em CLUSTER}, 2009.

\bibitem{ZKK04}
G.~Zheng, G.~Kakulapati, and L.~Kale.
\newblock Bigsim: A parallel simulator for performance prediction of extremely
  large parallel machines.
\newblock In {\em Proc. IPDPS}, 2004.

\end{thebibliography}

\balancecolumns
\normalsize

\newpage
\appendix
\section{Spatial Joins in MapReduce}
\label{sec:joins:mapreduce}

In this appendix, we formally develop the map and reduce functions for
processing a single tick of a behavioral simulation. 

\minisec{Formalizing Agents and Spatial Partitioning} We first introduce our
notation for agents and their state and effect attributes.  We denote an agent
$a$ as $a=\la\oid,\s,\e\ra$, where $\s$ is a vector of the agent's state
attributes and $\e$ is a vector of its effects. To refer to an agent or its
attributes at a tick $t$, we will write $a^t$, $\s^t$, or $\e^t$. Since effect
attributes are aggregated using combinator functions, they need to be reset at
the end of every tick. We will use $\theta$ to refer to the vector of
idempotent values for each effect. Finally, we use $\oplus$ to denote the
aggregate operator that combines effect variables according to the appropriate
combinator.

The neighborhood property implies that some subset of each agent's state
attributes are spatial attributes that determine an agent's position. For an
agent $a=\la\oid,\s,\e\ra$, we denote this spatial location $\loc(\s) \in
\rdom$, where $\rdom$ is the spatial domain. Given an agent $a$ at location
$l$, the visible region of $a$ is $\VR(l)\subseteq\rdom$.

Both the map and reduce tasks in our framework will have access to a spatial
partitioning function $\Part:\rdom\to\Pclass$, where $\Pclass$ is a set of
partition ids. This partitioning function can be implemented in multiple ways,
such as a regular grid or a quadtree.
We define the \emph{owned} set of a partition $p$ as the inverse image of $p$
under $\Part$, i.e., the set of all locations assigned to $p$. Since each
location has an associated visible region, we can also define the visible
region of a partition as $\VR(p) = \bigcup_{l\in\rdom,\Part(l)=p}\VR(l)$. This
is the set of all locations that might be visible by some agent in $p$. 

\minisec{Simulations with Local Effects Only}
Since
the query phase of an agent can only depend on the agents inside its visible
region, the visible region of a partition contains all of the data necessary
to execute the query phase for its entire owned region. We will take advantage
of this by \emph{replicating} all of the agents in this region at $p$ so that
the query phase can be executed without communication.

Figure~\ref{fig:local} shows the map and reduce functions for processing tick
$t$ when there are only local effect assignments. At tick $t$, the map
function performs the update phase from the previous tick, and the reduce
function performs the query phase. The map function takes as input an agent
with state and effect variables from the previous tick ($a^{t-1}$), and
updates the state variables to $\s^{t}$ and the effect attributes to
$\theta$. During the very first tick of the simulation, $\e^{t-1}$ is
undefined, so $\s^t$ will be set to a value reflecting the initial simulation state. The map
function emits a copy of the updated agent keyed by partition for each
partition containing the agent in its visible set
($\loc(\s^t)\in\VR(p)$). This has the effect of replicating the agent $a$ to
every partition that might need it for query phase processing.
The amount of replication depends on the partitioning function and on the
size of each agent's visible region. 

The reduce function receives as input all agents that are sent to a particular
partition $p$. This includes the agents in $p$'s owned region, as well as
replicas of all the agents that fall in $p$'s visible region. The reducer will
execute the query phase and compute effect variables for all of the agents in
its owned region (agent $i$ s.t. $\Part(\loc(\s_i^t))=p$). This requires no
communication, since the query phase of an agent in $p$'s owned region can
only depend on the agents in $p$'s visible region, all of which are replicated
at the same reducer. The reducer outputs agents with updated effect attributes
to be processed in the next tick.

\minisec{Simulations with Non-Local Effects} The method above only works when
all effect assignments are local. If an agent $a$ makes an effect assignment
to some agent $b$ in its visible region, then it must communicate that effect
to the reducer responsible for processing $b$. Figure~\ref{fig:nonlocal} shows
the complete map and reduce functions to handle simulations with non-local
effect assignments. The first map function task is same as in the local effect
case. Each agent is partitioned and replicated as necessary. As before, the
first reduce function computes the query phase for the agents in $p$'s owned
set and computes effect values. In this case, however, it can only compute
intermediate effect values $\f^t$, since it does not have the effects assigned
at other nodes. This first reducer outputs one pair for every agent, including
replicas, that has its effects updated. These agents are keyed with the
partition that owns them, so that all replicas of the same agent go to the
same node.

The second map function is the identity, and the second reduce function
performs the aggregation necessary to complete the query phase. It receives
all of the updated replicas of all of the agents in its owned region and
applies the $\oplus$ operation to compute the final effect values and complete
the query phase. Each reducer will output an updated copy of each agent in its
owned set.

\begin{figure}
  \fbox{
    \begin{minipage}{\linewidth}
      \small
      \begin{align*}
        & \mrmap^t(\cdot,a^{t-1}) = \left[(p,\la\oid,\s^t,\theta\ra) \textrm{~where~}\loc(\s^t)\in\VR(p)\right]\\
        & \mrreduce^t(p, \left[\la\oid_i,\s_i^t,\theta\ra\right]) = \left[\left(p,\la\oid_i,\s_i^t,\e_i^t\ra\right), \forall i\textrm{~s.t.~} \Part(\loc(s_i^t))=p\right]\\
      \end{align*}
      \vspace{-10ex}
      \caption{Map and reduce functions with local effects only.}
      \label{fig:local}
    \end{minipage}
  }\\
  \fbox{
    \begin{minipage}{\linewidth}
      \small
      \begin{align*}
        & \mrmap_1^t(\cdot,a^{t-1}) = \left[(p,\la\oid,\s^t,\theta\ra) \textrm{~where~}\loc(\s^t)\in\VR(p)\right]\\
        & \mrreduce_1^t(p, \left[\la\oid_i,\s_i^t,\theta\ra\right]) = \left[\left(\Part(\loc(\s^t)),\la\oid_i,\s_i^t,\f_i^t\ra\right), \forall i\textrm{~s.t.~} \f_i^t\neq\theta\right] \\
        & \mrmap_2^t(k,a) = (k,a) \\
        & \mrreduce_2^t(p, \left[\la\oid_i,\s_i^t,\f_i^t\ra\right]) = \left[\left(p,\la\oid_i,\s_i^t,\oplus_j\f_j^t\ra\right), \forall j\textrm{~s.t.~} \oid_i=\oid_j\right]
      \end{align*}
      \vspace{-6ex}
      \caption{Map and reduce functions with non-local effects.}
      \label{fig:nonlocal}
    \end{minipage}
  }
\vspace*{-4ex}
\end{figure}

\newcommand{\eval}[1]{[\![#1]\!]}

\section{Formal Semantics of BRASIL}
\label{sec:monad:overview}

In this section, we provide a more formal presentation of the
semantics of BRASIL than the one presented in
Section~\ref{sec:programming}. In particular, we show how to convert
BRASIL expressions into monad algebra expressions for analysis and
optimization.  We also prove several results regarding effect
inversion, introduced in Section~\ref{sec:programming:opt}, and
illustrate the resulting trade-offs between computation and
communication.

For the most part, our work will be in the traditional monad algebra.
We refer the reader to the original work on this algebra~\cite{BNTW1995,Koch2005,Paradaens88,TBW1992}
for its basic operators and nested data model. We also
use standard definitions for the derived operations like
cartesian product and nesting.  For example, we define
cartesian product as
\vspace{-1ex}
\begin{equation}\label{E:Cartesian}
f \times g := \la 1:f,2:g\ra \circ \pwith{1} \circ \flatmap(\pwith{2})
\vspace{-1ex}
\end{equation}
For the purpose of readability, composition in
\eqref{E:Cartesian} and the rest of our presentation,
is read left-to-right; that is, $(f \circ g)(x) = g\big(f(x)\big)$.

We assume that the underlying domain is the real numbers, and that we
have associated arithmetic operators. We also add traditional aggregate
functions like $\textsc{count}$ and $\textsc{sum}$ to the algebra; these
functions take a set of elements (of the appropriate type) and return a
value.

In order to simplify our presentation, we do make several small changes
that relax the typing constraints in the classic monad algebra.  In
particular, we want to allow union to combine sets of tuples with
different types.  For this end, we introduce a special $\nil$ value.
This value is the result of any query that is undefined on the input
data, such as projection on a nonexistent attribute.  This value has
a form of  ``null-semantics'' in that values combined with $\nil$ are
$\nil$, and $\nil$ elements in a set are ignored by aggregates. In
addition, we introduce a special aggregate function $\get$.  When given
a set, this function returns its contents if it is a singleton, and
returns $\nil$ otherwise.  Neither this function, nor the presence
of $\nil$ significantly affects the expressive power of the monad
algebra~\cite{Suciu95}.

\subsection{Monad Algebra Translation}
\label{sec:monad:formal}

For the purpose of illustration, we assume that our simulation has only
one class of agents, all of which are running the same simulation script.
It is relatively easy to generalize our approach to multiple agent classes
or multiple scripts.  Given this assumption, our simulation data is
simply a set of tuples $\{ t_0, \ldots, t_n \}$ where each tuple $t_i$
represents the data inside of an agent.  Every agent tuple has a special
attribute \key which is used to uniquely identify the agent; variables
which reference another agent make use of this key.  The state-effect
pattern requires that all data types other than agents be processed by
value, so they can safely be stored inside each agent.

We let $\tau$ represent the type/schema of an agent.  In addition to the
key attribute, $\tau$ has an attribute for each state and effect field.
The value of a state attribute is the value of the field.  The value of
an effect attribute is a pair $\la 1: n, 2: \agg\ra$ where $n$ is a
unique identifier for the field and $\agg$ is the aggregate for
this effect.

During the query phase, we represent effects as a tuple
$\la k:\D{N}, e:\D{N}, v:\sigma \ra$, where $k$ is the key of the object
being effected, $e$ is the effect field identifier, and $v$ is the value
of the effect. As a shorthand, let $\rho$ be this type.  Even though effects
may have different types, because of our relaxed typing, this will
not harm our formalism.

The syntax of BRASIL forces the programmer to clearly separate the code
into a query script (i.e.\ \texttt{run()}) and an update script
(the update rules).  A query script compiles to a expression whose
input and output are the tuple $\la 1: \tau', 2: \{ \tau \}, 3: \{\rho\}\ra$.
The first element represents the active agent for this script; $\tau'$
``extends'' type $\tau$ in that it is guaranteed to have an attribute
for the key and each state field, but it may have more attributes.
The second element is the set of all other agents with which this
agent must communicate.  The last element is the set of effects
generated by this script.

Let $Q$ be the monad expression for the query script.  Then the effect
generation stage is the expression
\vspace{-1ex}
\begin{equation}
\F{Q}(Q) = (\id \times \id)\circ\nest_2\circ\map\big(\widehat{Q})
\vspace{-1ex}
\end{equation}
where $\widehat{Q}$ is defined as
\vspace{-1ex}
\begin{equation}
\widehat{Q} =
    \la\!1\!:\!\pi_1, 2\!:\!\pi_2, 3\!:\!\{\!\}\!\ra\circ Q \circ
    \la\!1\!:\!\pi_1, 2\!:\!\pi_3\ra
\vspace{-1ex}
\end{equation}
This produces a set of agents and the effects that they have generated
(which may or may not be local).  In general, we will aggregate aggressively,
so each agent will only have one effect for each pair $k,e$.  For the
effect aggregation stage, we must aggregate the effects for each agent and
inline them into the agent tuple.  If we only have local effect assignments,
then this expression is $\F{Q}(Q) \circ \F{E}$ where
\vspace{-1ex}
\begin{equation}
\F{E} =
    \map\big(
        \la\!\key\!:\!\pi_\key, s_i\!:\!\pi_{s_i}, e_j\!:\!%
            \pi_2\circ\sigma_{\pi_e = \pi_{e_j}\circ\pi_1}\circ(\pi_{e_j}\circ\pi_2)\!%
            \ra_{i,j}
    \big)
\vspace{-1ex}
\end{equation}
where the $s_i$ are the state fields and the $e_j$ are the effect
fields. However, in the case where we have non-local effects, we
must first redistribute them via the expression \vspace{-1ex}
\begin{equation}
\F{R} =
    (\pi_1 \times \pi_2) \circ
    \map\big(
        \la\!1\!:\!\pi_1,2\!:\!\flatten\circ\sigma_{\pi_k = \pi_1\circ\pi_\key}\ra
    \big)
\vspace{-1ex}
\end{equation}
So the entire query phase is $\F{Q}(Q)\circ\F{R}\circ\F{E}$.  Finally, for
the update phase, each state $s_i$ has an update rule which corresponds to an
expression $U_{s_i}$.  These scripts read the output of the expression $\F{E}$.
Hence the query for our entire simulation is the expression
\vspace{-1ex}
\begin{equation}
\F{Q}(Q)\circ\F{R}\circ\F{E}\circ\F{U}(U_{s_0},\ldots,U_{s_n})
\vspace{-1ex}
\end{equation}
where the update phase is defined as
\vspace{-1ex}
\begin{equation}
\F{U}(U_{s_0},\ldots,U_{s_n}) =
    \map\big(
        \la \key:\pi_\key, s_0:U_{s_0}, \ldots, s_n:U_{s_n}\ra
    \big)
\vspace{-1ex}
\end{equation}

The only remaining detail in our formal semantics is to define semantics
for the query scripts and update scripts.  Update scripts are just simple
calculations on a tuple, and are straightforward.  The only nontrivial
part concerns the query scripts.  A script is just a sequence of statements
$S_0;\ldots;S_n$ where each statement is a variable declaration, assignment,
or control structure (e.g.\ conditional, \texttt{foreach}-loop).
See the BRASIL Language manual for more information on the complete grammar~\cite{BRASILManual}.
It suffices to define, for each statement $S$, a monad algebra expression
$\eval{S}$ whose input and output are the triple
$\la 1: \tau', 2: \{ \tau \}, 3: \{\rho\}\ra$; we handle
sequences of commands by composing these expressions.

\begin{figure}
\begin{equation*}
\begin{split}
\eval{&\T{const $\tau$ $x$ = $E$}}_V =
\la 1\!:\!\chi_x(\eval{E}_V),2\!:\!\pi_2,3\!:\!\pi_3\ra \\
\eval{&\T{effect $\tau$ $x$ :\ $f$}}_V =
\la 1\!:\!\chi_x(\la 1:\rho(x),2:f\ra),2\!:\!\pi_2,3\!:\!\pi_3\ra \\
\eval{&\T{$x$ <- $E$}}_V =
    \la 1\!:\!\pi_2, 2\!:\!\pi_2, 3\!:\!\pi_3\oplus \\
&\phantom{\T{$x$ <- $E$}]\!]_V\> = \la 1:}
        (\la 1\!:\!\pi_1\circ\pi_\key, 2\!:\!\rho(x), 3\!:\!\eval{E}_V\ra\circ\sng)\ra \\
\eval{&\T{$R.x$ <- $E$}} =
    \la 1\!:\!\pi_2, 2\!:\!\pi_2, 3\!:\!\pi_3\oplus\\
&\phantom{\T{$R.x$ <- $E$}]\!]_V\> = \la}
     (\la 1\!:\!\eval{R}_v, 2\!:\!\rho(x), 3\!:\!\eval{E}_V\ra\circ\sng)\ra \\
\eval{&\T{if ($E$) \{$B_1$\} else \{$B_2$\}}}_V = \\
&\qquad\qquad
    \la 1:\pi_2, 2:\pi_2,
        3:\sng\circ\sigma_{\eval{E}_V}\circ\get\circ \eval{B_1}_V \oplus \\
&\phantom{\qquad\qquad\la 1:\pi_2, 2:\pi_2,
    3:} \sng\circ\sigma_{\neg \eval{E}_V}\circ\get\circ \eval{B_2}_V
\ra \\
\eval{&\T{foreach ($\tau$ x :\ $E$) \{$B$\}}}_V = \\
&\qquad\qquad\la 1\!:\!\pi_2, 2\!:\!\pi_2, \\
&\phantom{\qquad\qquad\la}
    3\!:\!\la 1\!:\!\pi_1\circ\chi_x(\eval{E}_V)\circ\pwith{x},2\!:\!\pi_2, 3\!:\!\pi_3\ra \\
&\phantom{\qquad\qquad\la 3\!:}
    \circ \flatmap(\eval{B}_V \circ \pi_3)
    \ra
\end{split}
\end{equation*}
\vspace{-3ex}
\caption{Translation for Common Commands}
\label{Fi:Commands}
\vspace{-2ex}
\end{figure}

Recall that our query script semantics depends on the visibility
constraints in the script.  We generalize the approach from
Section~\ref{sec:programming:brasil} and represent visibility as a
predicate $V(x,y)$ which compares two agents.  For any statement $S$, we
let $\eval{S}$ be its interpretation with this constraint and
$\eval{S}_V$ be the semantics without.

Before translating statements, we must translate expressions that may
appear inside of them.  The only nontrivial expressions are references;
arithmetic expressions or other complex expressions translate to the
monad algebra in the natural way. References return either the variable
value, or the key for the agent referenced. Ignoring visibility constraints,
for any identifier $x$, we define
\vspace{-1ex}
\begin{equation}\label{E:Standard:Reference}
\eval{x} =
\begin{cases}
\begin{split}
\pwith{3}&\circ\sigma_{\pi_1\circ\pi_x\circ\pi_1 = \pi_3\circ\pi_e} \\
&\circ\sigma_{\pi_1\circ\pi_\key = \pi_3\circ\pi_k}\circ\get
\end{split}
&\text{$E$ is \texttt{effect}} \\
\pi_1\circ\pi_x &\text{otherwise}
\end{cases}
\vspace{-1ex}
\end{equation}
In general, for any reference $E.x$, we define
\vspace{-1ex}
\begin{equation}\label{E:Complex:Reference}
\eval{E.x} = \la 1:\pi_2\circ\sigma_{\pi_\key = \eval{E}}\circ\get,2:\pi_2,3:\pi_3\ra\circ\eval{x}
\vspace{-1ex}
\end{equation}
If we include visibility constraints, $\eval{E}_V$ is defined in much the
same way as $\eval{E}$ except when $E$ is an agent reference.  In that case,
\vspace{-1ex}
\begin{equation}\label{E:Visible:Reference}
\begin{split}
\eval{x}_V =
    \la\!1\!:\!\id, 2\!:\!\pi_2\circ\sigma_{\pi_\key = \eval{E}}\circ\get\!\ra&\circ
        \la\!1\!:\!V, 2\!:\!\pi_2\!\ra\circ\sng \\
        &\circ\sigma_{\pi_1}\circ\get\circ\pi_2\circ\pi_k \\
\end{split}
\vspace{-1ex}
\end{equation}
This expression temporarily retrieves the object, tests if it is visible, and
returns $\nil$ if not.

To complete our semantics, we introduce the following notation.
\vspace{-3ex}
\begin{itemize}
\item
$\chi_a(f)$ is an operation that takes a tuple and extends it with an
attribute $a$ having value $f$. It is definable in the monad algebra, but its
exact definition depends on its usage context.
\item
$\oplus$ is an operations that takes two sets of effects and aggregates those
with the same key and effect identifier.  It is definable on in the monad
algebra, but its exact definition depends on the effect fields in the
BRASIL script.
\item
$\rho(x)$ is the effect identifier for a variable $x$.  In practice,
this is the position of the declaration of $x$ in the BRASIL script.
\vspace{-1ex}
\end{itemize}
Given these three expressions, Figure~\ref{Fi:Commands} illustrates the
translation of some of the more popular statements in the monad algebra.
In general, variable declarations modify the first element of the input
triple (i.e.\ the active agent), while assignments and control structures
modify the last element (i.e.\ the effects).

As we discussed in Section~\ref{sec:programming:opt}, this formalism allows
us to apply standard algebraic rewrites from the monad algebra for
optimization.  For example, many of the operators in Figure~\ref{Fi:Commands}
-- particularly the tuple constructions -- are often unnecesary.  They
are there to preserve the input and output format, in order to facility
composition.  There are rewrite rules that function like dead-code
elimination, in that they remove tuples that are not being used.  One
of the consequences is that many \texttt{foreach}-loops simplify
to the form
\vspace{-1ex}
\begin{equation}\label{E:Simple:For}
F(E,B) = \la 1: \id, 2:E\ra \circ \pwith{2}\circ\flatmap(B)
\vspace{-1ex}
\end{equation}
Note that this form is ``half'' of the cartesian product in \eqref{E:Cartesian};
it joins a single value with a set of values.  Thus when we simplify the
\texttt{foreach}-loop to this form, we can often apply join
optimization techniques to the result.

Another advantage of this formalism is that it allows us to prove correctness
results.  For example, the semantics of the visibility constraints in BRASIL
is defined in terms of weak references.  However, our implementation involves
restricting the read set of each agent to those that are visible.  It is a
simple exercise to use our formalism to prove that theses approaches are
equivalent.  The following result is the formal version of Theorem~\ref{T:Visible}
from Section~\ref{sec:programming:brasil}.

\vspace{-1ex}
\begin{theorem}\label{T:Visible}
Let $Q$ be a BRASIL query script whose references are restricted by visibility
predicate $V$. Then
\vspace{-1ex}
\begin{equation}\label{E:Visible:Read}
\nest_2  \circ \map(\widehat{\eval{Q}_V}) =
\sigma_V \circ \nest_2 \circ \map(\widehat{\eval{Q}})
\vspace{-1ex}
\end{equation}
Furthermore, let
\vspace{-1ex}
\begin{equation}
O(F) = F\circ (\pi_2\times\pi_3) \circ \sigma_{\pi_1\circ\pi_\key = \pi_2\circ\pi_k} \circ \map(\pi_1)
\vspace{-1ex}
\end{equation}
be the set of objects affected by an expression $F$.  Then
\vspace{-1ex}
\begin{equation}\label{E:Visible:Write}
\map(\la 1\!:\!\pi_1, 2\!:\!O(\eval{Q}_V) \ra )\circ \sigma_V = \map(\la 1\!:\!\pi_1, 2\!:\!O(\eval{Q}_V) \ra )
\vspace{-1ex}
\end{equation}
\end{theorem}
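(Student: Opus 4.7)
The plan is to prove both identities by structural induction on the BRASIL query script $Q$, leveraging the null-semantics design of the monad algebra from Section~\ref{sec:monad:overview}: arithmetic and tuple constructors propagate $\nil$, while aggregates such as $\get$ and $\oplus$ silently discard $\nil$ entries. The pivotal observation is that the visibility-constrained reference semantics $\eval{x}_V$ in \eqref{E:Visible:Reference} returns $\nil$ precisely when the visibility predicate $V$ fails on the target agent, whereas the unconstrained $\eval{x}$ in \eqref{E:Standard:Reference} does not. Every visibility failure therefore manifests, at the level of the monad algebra, as a $\nil$ that is indistinguishable from a selection by $V$ carried out after the fact.

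For equation~\eqref{E:Visible:Read}, I would begin with the base cases for identifier and complex references $E.x$, where the identity reduces to the algebraic fact that guarding a dereference with $\sigma_V$ is equivalent to emitting $\nil$ when $V$ fails and letting the enclosing $\get$ or aggregate absorb it. The inductive step then proceeds case by case through the statement forms of Figure~\ref{Fi:Commands}. Variable declarations, effect-field declarations, and local assignments \texttt{x <- E} are immediate: any $\nil$ subexpression is absorbed into the value component of the emitted effect triple and then discarded by $\oplus$. Conditionals use $\sigma_{\nil} = \{\}$, so both branches are vacuous when the guard stems from an invisible reference. The delicate case is the \texttt{foreach}: using the simplified form \eqref{E:Simple:For}, I would show that restricting the body to visible iterations matches first filtering the iterated extent by $V$ and then applying the unrestricted body, which is justified by the commutation of $\sigma_V$ with $\flatmap$ when the body touches the iteration variable only through visibility-guarded references.

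For equation~\eqref{E:Visible:Write}, the argument is more direct. By the definition of $O$, only pairs $(a,b)$ for which some effect emitted by $\eval{Q}_V$ carries a target key equal to the $\key$ of $b$ survive the selection $\sigma_{\pi_1\circ\pi_\key = \pi_2\circ\pi_k}$. However, by \eqref{E:Visible:Reference} applied to the target of any non-local assignment \texttt{R.x <- E}, the key field of an emitted effect is $\nil$ whenever $V$ fails on the intended target. Hence $O(\eval{Q}_V)$ only ranges over pairs already satisfying $V$, so applying $\sigma_V$ to its input cannot eliminate any surviving pair, and the two sides of \eqref{E:Visible:Write} coincide.

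The hard part will be the \texttt{foreach} case in conjunction with the outer $\nest_2 \circ \map(\widehat{\cdot})$. I must verify that $\nest_2$ commutes with $\sigma_V$ in exactly the form demanded by the statement, and that $\oplus$-aggregation over a mix of visible and invisible writes from the same source agent collapses to a value that post-hoc filtering can recover. This requires tracking how \texttt{const}-bound variables holding potentially-$\nil$ references behave under repeated dereferencing inside nested \texttt{foreach} bodies, and showing that any use of an invisible reference taints the entire effect triple so that, after aggregation, the result is indistinguishable from having restricted the read set up front. I expect this bookkeeping, rather than any single algebraic identity, to be the main technical burden of the proof.
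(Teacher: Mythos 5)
Your proposal cannot be matched line-by-line against the paper, because the paper never actually writes out a proof of Theorem~\ref{T:Visible}: it states only that ``it is a simple exercise to use our formalism'' to show the weak-reference semantics and the restricted-read-set implementation coincide, and reserves its proof sketches for Theorems~\ref{T:Invert} and~\ref{T:Visible:Invert}. Your plan is a sound way to discharge that exercise, and it rests on exactly the right observation: the \emph{only} place where $\eval{\cdot}_V$ and $\eval{\cdot}$ differ is agent-reference resolution, and by \eqref{E:Visible:Reference} a reference to an invisible agent yields $\nil$ under the constrained semantics, while under the unconstrained semantics applied to a $V$-pre-filtered read set the same reference yields $\nil$ because the $\sigma$-then-$\get$ lookup in \eqref{E:Standard:Reference} comes up empty. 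Since $\nil$ is absorbed uniformly by arithmetic, selections, $\get$, and $\oplus$, the induction over the statement forms of Figure~\ref{Fi:Commands} is then largely mechanical, and your treatment of \eqref{E:Visible:Write} via $\nil$-keyed effects being dropped by the selection inside $O$ is correct. Two small points to tighten: first, in the \texttt{foreach} case the commutation of $\sigma_V$ with $\flatmap$ in \eqref{E:Simple:For} is where the whole content of \eqref{E:Visible:Read} lives (iterating the full extent with $\nil$-resolving references versus iterating the filtered extent), so state and prove that commutation as a standalone lemma rather than folding it into the induction; second, the two sides of \eqref{E:Visible:Read} can formally differ for an agent whose visible set is empty (it disappears after $\sigma_V \circ \nest_2$ on the right but survives on the left), so you should either assume self-visibility $V(a,a)$ or note the convention for empty groups under $\nest_2$. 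Neither caveat undermines the approach; they are the kind of bookkeeping you already anticipate.
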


The significance of \eqref{E:Visible:Read} is that, instead of implementing
the overhead of checking for weak references, we can filter out the
agents that are not visible and eliminate any further visibility checking.
The significance of \eqref{E:Visible:Write} is that weak references insure
agents can only affect visible agents.

\subsection{Effect Inversion}
\label{sec:monad:inversion}

As we saw in Section~\ref{sec:programming:opt}, there is an advantage to
writing a BRASIL script so that all effects assignments are local.  It may
not always be natural to do so, as the underlying scientific models may be
expressed in terms of non-local effects.  However, in certain cases, we
may be able to automatically rewrite a BRASIL program to only use local
effects.  In particular, if there are no visibility constraints, then we
can always invert effect assignments to make them local-only.  The following
is the formal version of the result stated in Section~\ref{sec:programming:opt}.

\vspace{-1ex}
\begin{theorem}\label{T:Invert}
Let $Q$ be a query script with no visibility constraints.  There is a
script $Q'$ with only local effects such that $\eval{Q} = \eval{Q'}$.
\end{theorem}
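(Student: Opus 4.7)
The plan is to construct $Q'$ by an explicit source-to-source rewrite: inside the new script, each agent iterates over the extent of all agents, simulates what the original script $Q$ would have done with that agent as the active one, and, whenever the simulated execution would have emitted a non-local write to ``me'', replaces it with a local write to my own corresponding effect field. Concretely, I would define $Q'.\T{run()}$ to be a single \texttt{foreach ($\tau$ a : Extent)} loop whose body is obtained from $Q$ by a syntactic transformation $\Phi$ which (i)~redirects every implicit self-reference of $Q$ through the bound variable \texttt{a} (so that state reads continue to come from the simulated agent) and (ii)~rewrites every effect-assignment statement \texttt{c.x <- E} --- including the local case where \texttt{c} is \texttt{this} --- into \texttt{if (c.key == self.key) \{ x <- $\Phi(E)$ \}}, where \texttt{self} denotes the outer agent running $Q'$. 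Because the theorem assumes no visibility constraints, every cross-agent reference produced by $\Phi$ is legal in BRASIL, so $Q'$ is a well-formed script with only local effect assignments.

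Next, I would prove $\eval{Q} = \eval{Q'}$ via the monad algebra translation of Section~\ref{sec:monad:formal}. The effect-generation phase $\F{Q}(\eval{Q})$ applied over the agent set produces, after $\nest_2$, the multiset $M = \{(a,(k,e,v))\}$ of all effect triples each active agent $a$ emits. The same multiset, up to reordering by the outer iteration, is produced by $\F{Q}(\eval{Q'})$: the outer \texttt{foreach} over \texttt{a} inside $\Phi(Q)$ mirrors the outer $\map$ over active agents in the original $\F{Q}$, and the guard inserted by $\Phi$ retains exactly those triples whose target key equals the key of \texttt{self}. Hence, before aggregation, the multiset of triples \emph{routed to each agent} is identical in both scripts. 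The original script delivers these via $\F{R}\circ\F{E}$ (redistribute-then-aggregate), while $Q'$ delivers them via $\F{E}$ alone (aggregate in place); since the combinator $\oplus$ associated to each effect field is associative, commutative, and idempotent-on-$\theta$ by the state-effect pattern, the two aggregation orders yield equal values. Thus, the output of $\F{Q}\circ\F{R}\circ\F{E}$ on $Q$ equals the output of $\F{Q}\circ\F{E}$ on $Q'$, and composing with $\F{U}$ preserves equality.

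The main obstacle is making the rewrite $\Phi$ compositional across BRASIL's nested control structures while respecting the language's syntactic restrictions, most notably that effect variables may not be read inside a \texttt{foreach}-loop and that all effect writes inside one are aggregated. I would define $\Phi$ by structural induction on the translation rules of Figure~\ref{Fi:Commands}, propagating the key-guard through conditionals and nested loops and leaving non-assignment statements essentially unchanged modulo redirection of self-references. The key invariants of the induction are that (a)~every sub-statement continues to emit exactly the triples of the original that target \texttt{self}, and (b)~no rewrite step introduces a read of an effect field inside a \texttt{foreach}-loop, which follows because effect reads in $Q$ already occur only outside loops and $\Phi$ adds only an outer loop, leaving inner lexical positions unchanged. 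A minor secondary issue is that the resulting $Q'$ has work quadratic in the agent count, matching the remark in Section~\ref{sec:programming:opt} that this construction is only practical when combined with the indexing and join-simplification rewrites from the monad algebra; this is orthogonal to correctness and need not be addressed here.
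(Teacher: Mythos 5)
This is essentially the paper's own proof: the paper forms $Q_1;Q_3$, where $Q_1$ is $Q$ with all syntactically non-local assignments deleted and $Q_3$ wraps a copy of $Q$ (state references redirected through the loop variable, each non-local assignment \texttt{E.x <- v} replaced by the guarded local assignment \texttt{if (E == this) \{ x <- v \}}) in a \texttt{foreach} over the extent, with correctness resting, exactly as in your argument, on effect fields being unreadable during the query phase and on order-independent effect aggregation. Your only deviation is folding the local assignments into the same simulation loop via the iteration where the simulated agent equals the running agent, instead of keeping them in the separate prefix $Q_1$ --- a cosmetic difference, so the proposal matches the paper's approach and is fine.
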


\begin{proof}[Sketch]
Our proof takes advantage of the fact that effect fields (as opposed to
effect variables) may not be read during the query phase, and that
effects are aggregated independent of order.  We start with $Q$ and
create a copy script $Q_1$.  Within this copy, we remove all syntactically
non-local effect assignments (e.g.\ $E.x \T{ <- } v$).  Some of these may
actually be local in the semantic sense, but this does not effect our proof.

We construct another copy $Q_2$.  For this copy, we pick a variable $a$ that
does not appear in $Q$.  We replace every local state reference $x$ in $Q$
with $a.x$. We also remove all local effect assignments.  Finally, we replace
each syntactically non-local assignment $E.x \T{ <- } v$ with the conditional
assignment \T{if ($E$ == this) \{$x$ <- $v$\}}. We then let $Q_3$ be the
script
\vspace{-1ex}
\begin{equation*}
\T{foreach(Agent a :\ Extent<Agent>) \{ $Q_2$(a) \}}
\vspace{-1ex}
\end{equation*}
That is, $Q_3$ is the act of an agent running the script for each other
agent, searching for effects to itself, and then assigning them locally.
The script $Q_1;Q_3$ is our desired script.
\end{proof}

Note that this conversion comes at the cost of an additional
\texttt{foreach}-loop, as each agent simulates the actions of all other
agents.  Thus, this conversion is much more computationally
expensive than the original script.  However, we can often simplify
this to remove the extra loop.  As mentioned previously, a
\texttt{foreach}-loop can often be simplified to the form in
\eqref{E:Simple:For}.  In the case of two nested loops over the same
set $E$, the merging of these two loops is a type of self-join.
That is,
\vspace{-1ex}
\begin{equation*}
\begin{split}
F(E,\!F(\!E,\!B\!)) &\!=\!
    \la\!1\!:\!\id, 2\!:\!E\!\ra \circ \pwith{2}\circ \\
&\phantom{=\la}
    \flatmap\big(
        \la\!1\!:\!\id, 2\!:\!E\!\ra\!\circ\!\pwith{2}\!\circ\!\flatmap(B)
    \big) \\
&=
\la 1:\id, 2:E, 3:E\ra\circ\pwith{2}\circ \\
&\phantom{=\la}
    \flatmap(\pwith{3}\circ\flatmap(B'))\\
&= \la 1:\id, 2\!:\!(E\times E)\ra\circ\pwith{2}\circ\flatmap(B')
\end{split}
\vspace{-1ex}
\end{equation*}
where $B'$ and $B''$ are $B$ rewritten to account for the change in
tuple positions.  As part of this rewrite, may discover that that
self-join is redundant in the expression $B''$ and eliminate it;
this is how we get simple effect inversions like the one illustrated
in Section~\ref{sec:programming:opt}.

In the case of visibility constraints, the situation becomes a little
more complex.  In order to do the inversion that we did the proof of
Theorem~\ref{T:Invert}, we must require that any agent $a_1$ that
assigns effects to another agent $a_2$ must restrict its visibility
to agents visible to $a_2$; that way $a_2$ can get the same results
when it reproduces the actions of $a_1$.  This is fairly restrictive,
as it suggests that every agent needs to be visible to every other
agent.

We can do better by introducing an information flow analysis. We only
require that, for each non-local effect assigned to agent, that effect
is computed using only information from agents visible to the one being
assigned. However, this property depends on the values of the agents,
and cannot (in generally) be inferred statically from the script.  Thus it
is infeasible to exploit this property in general.

However, there is another way to invert scripts in the phase of
visibility constraints.  Suppose the visibility constraint for
a script $Q$ is a distance bound, such as $d(x,y) < R$.  If we
relax the visibility constraint for the script in the proof of
Theorem~\ref{T:Invert} to $d(x,y) < 2R$, then the proof carries
through again.  We state this modified result as follows:

\vspace{-1ex}
\begin{theorem}\label{T:Visible:Invert}
Let $Q$ be a query script with visibility constraint $V$.  Let $V'$
be such that $V'(x,y)$ if and only if $\exists z V(x,z) \wedge V(z,y)$.
Then there is a script $Q'$ with only local effects such that
$\eval{Q}_V = \eval{Q'}_{V'}$.
\end{theorem}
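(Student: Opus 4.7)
The plan is to mirror the construction from the proof of Theorem~\ref{T:Invert} and argue that a single doubling of the visibility predicate suffices. Let $Q_1$ be the copy of $Q$ in which every syntactically non-local effect assignment is deleted; let $Q_2$ pick a fresh variable $a$, replace each local state reference $x$ by $a.x$, drop all local effect assignments, and rewrite each non-local statement \T{E.x <- v} as \T{if (E == this) \{x <- v\}}; and let $Q_3$ be \T{foreach(Agent a :\ Extent<Agent>) \{$Q_2$(a)\}}. Take $Q' := Q_1; Q_3$. By construction $Q'$ contains only local effect assignments, so it is of the required form.

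The correctness claim $\eval{Q}_V = \eval{Q'}_{V'}$ splits into two parts. First, the local effects in $Q$ are reproduced by $Q_1$ verbatim, and since $V \Rightarrow V'$ (take the witness $z = x$ in the definition of $V'$), every reference that resolved under $V$ still resolves under $V'$. Second, fix any non-local effect that $Q$ would have agent $a_1$ assign to agent $a_2$ with value $v$. Three facts hold under $V$: the assignment was observable only because $V(a_1, a_2)$, the source expression $E$ resolved to $a_2$ from $a_1$'s perspective, and $v$ was computed from agents $a_3$ with $V(a_1, a_3)$. When $a_2$ evaluates $Q_3$ under $V'$, the \T{foreach} ranges over every agent $a$ with $V'(a_2, a)$; choosing $z = a_1$ gives $V'(a_2, a_1)$, so the iteration reaches $a$ equal to $a_1$. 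Inside $Q_2(a_1)$, each reference that $a_1$ originally resolved to some $a_3$ must now be resolved by $a_2$, but the same witness $z = a_1$ yields $V'(a_2, a_3)$, so every such reference resolves to the same agent as before. Thus $a_2$ reproduces $v$ identically, and the guard \T{if (E == this)} fires precisely on those iterations where $a_1$'s original target was $a_2$. Iterations in which $E$ does not resolve to $a_2$ contribute nothing because of the conditional. Since effect aggregation is order-independent, the aggregated values at each agent coincide on both sides.

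The main obstacle will be packaging this information-flow argument as a compositional equality in the monad algebra, rather than arguing informally about references. Theorem~\ref{T:Visible} is the key lever: it rephrases each visibility-restricted reference as a filter by the visibility predicate, reducing the core algebraic step to the set inclusion $\{a_3 : V(a_1, a_3)\} \subseteq \{a_3 : V'(a_2, a_3)\}$ combined with the observation that the conditional guard in $Q_2$ neutralizes the extra iterations introduced by the larger predicate. A secondary subtlety is that nested reference chains such as $a_1.E_1.E_2$ require no further enlargement of the visibility bound: the weak-reference semantics already forces every dereference in $Q$ to be checked against $a_1$'s visibility rather than an intermediate agent's, so the single relay $z = a_1$ works uniformly, independent of chain length. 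Once these two observations are formalized, structural induction on the BRASIL statements of Figure~\ref{Fi:Commands} carries the equality through, and composition with $\F{R}\circ\F{E}\circ\F{U}$ preserves it.
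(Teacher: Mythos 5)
There is a genuine gap, and it is exactly the point the paper's proof sketch flags as the \emph{only} difference from Theorem~\ref{T:Invert}. Your argument establishes monotonicity in one direction only: since $V\Rightarrow V'$ (itself needing reflexivity of $V$), every reference that resolved under $V$ still resolves under $V'$, and the relay witness $z=a_1$ shows $a_2$ can reach everything $a_1$ could see. But the failure mode of the doubled predicate is the converse: under the weak-reference semantics, a reference that would have resolved to \nil\ under $V$ may now resolve to a real agent under $V'$, and the read set over which inner \T{foreach}-loops and aggregates range (by Theorem~\ref{T:Visible}, the agent set filtered by the visibility predicate) is strictly larger under $V'$. This corrupts both halves of your construction: $Q_1$ is executed under $V'$, so its ``verbatim'' local effects are computed from a larger visible set and from references that no longer go \nil; and inside $Q_2(a_1)$ the value $v$ that $a_2$ reproduces may differ from what $a_1$ actually computed under $V$, because the simulation is carried out with $a_2$'s $V'$-neighborhood rather than $a_1$'s $V$-neighborhood. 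The guard \T{if (E == this)} only suppresses extra iterations of the \emph{outer} loop whose assignments do not target $a_2$; it does nothing about extra resolutions or extra visible agents inside the simulated body, so $[\![Q]\!]_V = [\![Q']\!]_{V'}$ fails for your $Q'$ as stated.

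The paper closes this hole by an additional rewriting step that you are missing: in the construction of $Q_2$, every agent reference is normalized into a local constant, and each such constant is wrapped in an explicit test of the \emph{original} constraint (e.g.\ \T{const agent temp = (visible(this,friend) ? friend : null)}), so that anything that would have been \nil\ under $V$ is forced to \nil\ even though the script now runs under the laxer $V'$; in other words, the old visibility predicate is re-imposed syntactically inside $Q'$ rather than relied upon from the runtime. With that device (applied so that reads and iterations are filtered back down to $V(a_1,\cdot)$), your relay argument via $z=a_1$ does go through, since $V'(a_2,\cdot)$ is then only needed to guarantee that the needed agents are \emph{available}, not to delimit what is read. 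Without it, the proof does not establish the claimed equality.
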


\begin{proof}[Sketch]
The proof is similar to that of Theorem~\ref{T:Invert}.  The only
difference is that we have to ensure that the increased visibility
for $Q'$ does not cause the weak references in a script to resolve
to agents that would have otherwise evaluated to $\nil$.  In the
construction of $Q_2$, we use local constants to normalize the
expressions so that any agent reference in the original script
becomes a local constant. For example, suppose each agent has a
field \texttt{friend} that is a reference to another agent.  If we
have a conditional of the form
\vspace{-1ex}
\begin{verbatim}
   if (friend.x - x < BOUND) { ... }
\end{verbatim}
\vspace{-1ex}
then we normalize this expression as
\vspace{-1ex}
\begin{verbatim}
   const agent temp = friend;
   if (temp.x - x < BOUND) { ... }
\end{verbatim}
\vspace{-1ex}
When then wrap these introduced constants with conditionals that
test for visibility with respect to the old constraints.  For example,
the code above would become
\vspace{-1ex}
\begin{verbatim}
   const agent temp = (visible(this,friend) ?
                         friend : null);
   if (temp.x - x < BOUND) { ... }
\end{verbatim}
\vspace{-1ex}
where \T{visible} is a method evaluating the visibility constraint
and $\null$ evaluates to $\nil$ in the monad algebra. Given the semantics
of $\nil$, this translation has the desired result.
\end{proof}

\section{Details of Simulation Models}
\label{sec:exps:model:details}

This section describes the two simulation models we have implemented
for BRACE single-node performance and scalability experiments.

\noindent \textbf{Traffic Simulation.} Traffic simulation is
required to provide accurate, high-resolution, and realistic
transportation activity for the design and management of
transportation systems. MITSIM, a state-of-the-art single-node
behavioral traffic simulator has several different models covering
different aspects of driver behavior~\cite{YKB99}. For example,
during each time step, a lane selection model will make the driver
inspect the lead and rear vehicles as well as the average velocity
of the vehicles in her current, left, and right lanes (within
lookahead distance parameter $\rho$) to compute the utility function
for each lane. A probabilistic decision of lane selection is then
made according to the lane utility. If the driver decides to change
her lane, she needs to inspect the gaps from herself to the lead and
rear vehicles in the target lane to decide if it is safe to change
to the target lane in the next time step. Otherwise, the vehicle
following model is used to adapt her velocity based on the lead
vehicle. The newly computed velocity will replace the old velocity
in the next time step. Note that if the driver cannot find a lead or
rear vehicle within $\rho$, she will just assume the distance to the
lead or rear vehicle is infinite, and adjust the velocity according
to a free-flow submodel. Because only limited information about MITSIM's
driving models is available in the literature~\cite{YKB99}, we found
it crucial to ensure that our implementation of MITSIM's lane
changing and acceleration models was as accurate as possible.

Therefore we validate consistency of the MITSIM model encoded in
BRASIL in terms of the simulated traffic conditions. One note is
that since the MITSIM model hand-coded a nearest neighbor indexing
for accessing the lead and rear vehicles, its lookahead distance
actually varies for each vehicle. In our reimplementation we fix
this distance to 200 in order to apply single-node spatial indexing.
We compare lane changing frequencies, average lane velocity and
average lane density with the segment length 20,000 on both
simulators. The statistical difference is measured by RMSPE
(Relative Mean Square Percentage Error), which is often used as a
goodness-of-fit measure in the traffic simulation
literature~\cite{CBTRL06}. The results for all these three
statistics are shown in Table~\ref{table:traffic:accuracy}. We can
see that except for Lane 4's average density and changing frequency,
all the other statistics demonstrate strong agreement between the
two simulators. This exception is due to the fact that in the MITSIM
lane changing model drivers have a reluctance factor to change to
the right most lane (i.e., Lane 4). As a result there are only a few
vehicles on that lane (56.33 vehicles on average compared to 351.42
on other lanes), and small lane changing record deviations due to
the fixed lookahead distance approximation can contribute
significantly to the error measurement.

\begin{table}
  \begin{tabular}{| l | c | c | c |}
  \hline
  \textbf{Lane} & \textbf{Change Frequency} & \textbf{Avg. Density} & \textbf{Avg. Velocity} \\
  \hline
  L1 & 8.93\% & 7.42\% & 0.007\% \\
  \hline
  L2 & 5.57\% & 10.38\% & 0.007\% \\
  \hline
  L3 & 7.67\% & 9.38\% & 0.007\% \\
  \hline
  L4 & 21.37\% & 19.72\% & 0.007\% \\
  \hline
  \end{tabular}
\caption{RMSPE for Traffic Simulation (LookAhead = 200)}
\label{table:traffic:accuracy} \vspace*{-1ex}
\end{table}

\noindent \textbf{Fish School Simulation.}
Couzin et al. have built a behavioral fish school simulation model to study
information transfer in groups of fish when group members cannot recognize
which companions are informed individuals who know about a food
source~\cite{CKFL05}. This computational model proceeds in time steps, i.e.,
at each time period each fish inspects its environment to decide on the
direction which it will take during the next time period. Two basic behaviors
of a single fish are avoidance and attraction. Avoidance has the higher
priority: Whenever a fish is too close to others (i.e., distance less than a
parameter $\alpha$), it tries to turn away from them. If there is no other
fish within distance $\alpha$, then the fish will be attracted to other fish
within distance $\rho > \alpha$. The influence will be summed and normalized
with respect to the current fish. Therefore, any other individuals out of the
visibility range $\rho$ of the current individual will not influence its
movement decision. In addition, informed individuals have a preferred
direction, e.g., the direction to the food source or the direction of
migration. These individuals will balance the strength of their social
interactions (attraction and avoidance) with their preferred direction
according to a weight parameter $\omega$.

\noindent \textbf{Predator Simulation.} Since both the traffic and
the fish school simulations only use local effect assignments, we
designed a new predator simulation, inspired by artificial society
simulations~\cite{EA96}. In this simulation, a fish can ``spawn''
new fish and ``bite'' other fish, possibly killing them, so density
naturally approaches an equilibrium value at which births and deaths
are balanced. Since effect inversion is not yet implemented in the
BRASIL Compiler, we program biting behavior either as a non-local
effect assignment (fish assign ``hurt'' effects to others) or as a
local one (fish collect ``hurt'' effects from others) in otherwise
identical BRASIL scripts.

\end{document}